\documentclass[11pt]{article}

\usepackage[hyphens]{url}
\usepackage[pagebackref,colorlinks]{hyperref}
\usepackage[hyphenbreaks]{breakurl}
\usepackage{amsmath} % AMS Math Package
\usepackage{amsthm} % Theorem Formatting
\usepackage{amssymb}	% Math symbols such as \mathbb
\usepackage{graphicx} % Allows for eps images
\usepackage{multicol} % Allows for multiple columns
\usepackage{multirow}
\usepackage{color}
\usepackage[dvips,letterpaper,margin=1in,bottom=1in]{geometry}
\usepackage[capitalize,noabbrev]{cleveref}

\usepackage{bm}
\usepackage{bbm}

\usepackage{ytableau}
\usepackage{longtable}

\usepackage[utf8]{inputenc}
\usepackage[english]{babel}

\usepackage[T1]{fontenc}
\AtBeginDocument{%
  \DeclareFontShape{T1}{cmr}{m}{scit}{<->ssub*cmr/m/sc}{}%
}

\usepackage{diagbox}
\usepackage{mathtools}

\newtheorem{theorem}{Theorem}[section]

\newtheorem{lemma}[theorem]{Lemma}

\newtheorem{fact}[theorem]{Fact}

\newtheorem{remark}{Remark}[section]

\DeclarePairedDelimiter\rbra{\lparen}{\rparen}
\DeclarePairedDelimiter\sbra{\lbrack}{\rbrack}
\DeclarePairedDelimiter\cbra{\{}{\}}
\DeclarePairedDelimiter\abs{\lvert}{\rvert}
\DeclarePairedDelimiter\Abs{\lVert}{\rVert}
\DeclarePairedDelimiter\ceil{\lceil}{\rceil}
\DeclarePairedDelimiter\floor{\lfloor}{\rfloor}

\DeclareMathOperator*{\E}{\mathbb{E}}
\DeclareMathOperator*{\Var}{\mathbb{V}\!\!\;ar}

\usepackage{latexsym}
\usepackage{CJK}

\usepackage{enumerate}
\usepackage{enumitem}

\usepackage{algorithm}
\usepackage{algpseudocode}
\usepackage{stmaryrd}
\usepackage{tabularx}
\usepackage{booktabs}
\usepackage{adjustbox}

\newcommand{\footremember}[2]{%
    \footnote{#2}
    \newcounter{#1}
    \setcounter{#1}{\value{footnote}}%
}

\usepackage{tikz}

\usetikzlibrary{quantikz2}

\begin{document}

\title{Settling the Sample Complexity of R\'enyi Entropy Estimation}
\author{Qisheng Wang \footremember{1}{Qisheng Wang is with the School of Computer Science, Shanghai Jiao Tong University (e-mail: \url{QishengWang1994@gmail.com}).}}
\date{}

\maketitle
\pagenumbering{roman}
\thispagestyle{empty}

\begin{abstract}
R\'enyi entropy estimation has been comprehensively investigated by \hyperlink{cite.AOST17}{Acharya, Orlitsky, Suresh and Tyagi (SODA 2015; \textit{IEEE Trans.\ Inf.\ Theory} 2017)} and consequent works, whereas only the sample complexity of R\'enyi entropy estimation of integer order has been settled. 
In this paper, we settle the sample complexity of R\'enyi entropy estimation of noninteger order, thereby completing the complexity picture of R\'enyi entropy estimation. 

Specifically, we show that for any noninteger $\alpha > 0$, it is sufficient and necessary to use
\[
\Theta\!\left(\frac{d^{\max\{1/\alpha,1\}}}{\varepsilon^{1/\alpha}\log(d)} + \frac{d^{|1-1/\alpha|}}{\varepsilon^2}\right)
\]
samples to estimate the R\'enyi entropy of order $\alpha$ of an unknown discrete distribution over an alphabet of size $d$ to within additive error $\varepsilon$. 
For the upper bound, we reduce the bias using a refined polynomial approximation estimator for large probabilities.
For the lower bound, we employ a different hard instance equipped with a new moment matching construction.
The constructive moment matching has constant bounded high-order moments, while attaining a fixed ratio between the $\alpha$-th moments, which is of independent interest. 
\end{abstract}

\newpage
\tableofcontents
\thispagestyle{empty}
\newpage
\pagenumbering{arabic}

\section{Introduction}

Entropy estimation is one of the fundamental questions in statistics and information theory not only because of its own interest but also due to numerous applications in different fields.
It was raised in \cite{Pan03} as an important problem from the perspective of algorithms and complexity, asking how many samples from an unknown probability distribution are needed to estimate its entropy. 
For a probability distribution $p = \rbra{p_1, p_2, \dots, p_d}$ over a finite alphabet of size $d$, its randomness is usually measured by the Shannon entropy \cite{Sha48a,Sha48b}
\[
\mathrm{H}\rbra{p} = - \sum_{i=1}^d p_i \log\rbra{p_i}. 
\]
The estimation of Shannon entropy has been extensively investigated in a series of works
\cite{Pan03,BDKR05,Pan04,Val11,VV11a,VV11b,VV17,JVHW15,JVHW17,WY16}. 
In particular, the sample complexity of Shannon entropy estimation was settled in \cite{JVHW15,WY16}: $\Theta\rbra{\frac{d}{\varepsilon\log\rbra{d}}+\frac{\log^2\rbra{d}}{\varepsilon^2}}$ samples are sufficient and necessary to estimate the Shannon entropy $\mathrm{H}\rbra{p}$ to within additive error $\varepsilon$ with high success probability. 
As a generalization, the estimation of functionals of the form $F\rbra{p} = \sum_{i=1}^d f\rbra{p_i}$ and multivariate functionals has also been extensively studied in the literature, e.g., \cite{HJW20,FS18,HJW18,WY20}.

As a popular generalization of Shannon entropy, the R\'enyi entropy of order $\alpha$ \cite{Ren61} is defined by
\[
\mathrm{H}_\alpha\rbra{p} = \frac{1}{1-\alpha} \log\rbra*{ \mathrm{F}_\alpha\rbra{p} }, \qquad \mathrm{F}_\alpha\rbra{p} = \sum_{i=1}^d p_i^\alpha,
\]
with the Shannon entropy $\mathrm{H}\rbra{p} = \mathrm{H}_1\rbra{p} = \lim_{\alpha \to 1} \mathrm{H}_\alpha\rbra{p}$ as the limiting case. 
R\'enyi entropy has been applied in unsupervised learning \cite{Xu98,JHE+03}, image processing \cite{MHGM00,SA04,NHZC06}, password guessing \cite{Ari96,PS04,HS11}, analysis of financial time series \cite{JKS12}, etc. 
The estimation of R\'enyi entropy has also been studied in a series of works \cite{Gra88,XE10,KLS12,AOST17,OS17,AK26}.
In particular, for integer $\alpha \geq 2$, it was shown in \cite{AOST17} that $\Theta_\alpha\rbra{\frac{d^{1-1/\alpha}}{\varepsilon^2}}$ samples are sufficient and necessary to estimate $\mathrm{H}_\alpha\rbra{p}$, the R\'enyi entropy of integer order $\alpha$, to within additive error $\varepsilon$, which was later improved to $\Theta\rbra{\frac{\alpha d^{1-1/\alpha}}{\varepsilon^2}}$ in \cite{AK26} with explicit $\alpha$-dependence for $2 \leq \alpha \lesssim \log\rbra{d}$.
For noninteger $\alpha > 0$ and $\alpha \neq 1$, the prior best results are due to \cite{AOST17}.
Specifically, for $0 < \alpha < 1$, $O\rbra{\frac{d^{1/\alpha}}{\varepsilon^{1/\alpha}\log\rbra{d}}}$ samples suffice for any sufficiently large $d$ relative to $\varepsilon$, whereas $\Omega\rbra{d^{1/\alpha-\gamma}}$ samples are necessary for any constant $\gamma > 0$.
For noninteger $\alpha > 1$, $O\rbra{\frac{d}{\varepsilon^{1/\alpha}\log\rbra{d}}}$ samples suffice for any sufficiently large $d$ relative to $\varepsilon$, whereas $\Omega\rbra{d^{1-\gamma}}$ samples are necessary for any constant $\gamma > 0$. 
It can be seen that there are still gaps between the upper and lower bounds on the sample complexity of R\'enyi entropy estimation of noninteger order: 
\begin{itemize}
    \item \textbf{Upper bounds:} The prior best result in \cite{AOST17} did not consider the case where $d$ is not sufficiently large relative to $\varepsilon$. 
    In that case, for example, when $\alpha > 1$, the sample complexity lower bound $\Omega\rbra{\frac{d^{1-1/\alpha}}{\varepsilon^2}}$ given in \cite{AOST17} applies, indicating that the $\varepsilon$-dependence should be refined, otherwise contradicting their upper bound $O\rbra{\frac{d}{\varepsilon^{1/\alpha}\log\rbra{d}}}$ since $1/\alpha < 2$. 
    A similar issue on the $\varepsilon$-dependence for $0 < \alpha < 1$ was also pointed out in \cite{WZL24}.
    \item \textbf{Lower bounds:} The lower bounds established in \cite{AOST17} does not fully match their upper bounds. The $\varepsilon$-dependence remains unclear and a possible (quasi-)polylogarithmic factor of $d$ in the sample complexity has not yet determined. 
\end{itemize}

In this paper, we close these gaps for all noninteger orders $\alpha$ from both sides by presenting optimal estimators together with matching lower bounds, thereby settling the sample complexity and completing the picture of R\'enyi entropy estimation.

\subsection{Main results}

We have the following matching upper and lower bounds on the sample complexity of R\'enyi entropy estimation. 

\begin{theorem}[\cref{thm:estimator,thm:lower-bound} combined] \label{thm:main}
    Fix a noninteger constant $\alpha > 0$ and $\alpha \neq 1$.
    For an unknown discrete probability distribution $p$ over a finite alphabet of size $d$, it is sufficient and necessary to use 
    \[
    \Theta\rbra*{ \frac{d^{\max\cbra{1/\alpha, 1}}}{\varepsilon^{1/\alpha} \log\rbra{d}} + \frac{d^{\abs{1-1/\alpha}}}{\varepsilon^2} } = \begin{cases}
        \Theta\rbra*{ \dfrac{d^{1/\alpha}}{\varepsilon^{1/\alpha} \log\rbra{d}} }, & 0 < \alpha \leq 1/2, \\
        \Theta\rbra*{ \dfrac{d^{1/\alpha}}{\varepsilon^{1/\alpha} \log\rbra{d}} + \dfrac{d^{1/\alpha - 1}}{\varepsilon^2} }, & 1/2 < \alpha < 1, \\
        \Theta\rbra*{\dfrac{d}{\varepsilon^{1/\alpha}\log\rbra{d}} + \dfrac{d^{1 - 1/\alpha}}{\varepsilon^2} }, & \alpha > 1, \alpha \notin \mathbb{N},
    \end{cases}
    \]
    samples from $p$ to estimate the R\'enyi entropy of order $\alpha$, $\mathrm{H}_\alpha\rbra{p}$, to within additive error $\varepsilon$ with success probability at least $2/3$. 
\end{theorem}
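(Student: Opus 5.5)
The theorem has two halves, and I would prove them separately. For the upper bound I reduce estimating $\mathrm{H}_\alpha(p)$ to additive-error estimation of the power sum $\mathrm{F}_\alpha(p)$; for the lower bound I use two-fuzzy-hypotheses arguments. First I reduce to relative error. Since $\mathrm{H}_\alpha = \frac{1}{1-\alpha}\log \mathrm{F}_\alpha$, it suffices to estimate $\mathrm{F}_\alpha(p)$ within multiplicative error $1\pm c\varepsilon$. By H\"older, $\mathrm{F}_\alpha(p)\ge d^{1-\alpha}$ for $\alpha<1$ and $\mathrm{F}_\alpha(p)\ge d^{1-\alpha}$ for $\alpha>1$ as well, where the minimum is attained at the uniform distribution. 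Rescaling by $n^\alpha$ and using Poissonization, I therefore aim to estimate $\sum_i (np_i)^\alpha$ to additive error about $\varepsilon\, n^\alpha\, \mathrm{F}_\alpha(p)$.

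Following the Jiao--Venkat--Han--Weissman / Wu--Yang framework, I split the sample. One half classifies each symbol as small ($p_i\lesssim c\log n/n$) or large. For small symbols I use an unbiased estimator of the best degree-$L\asymp \log n$ polynomial approximation of $x^\alpha$ on $[0,c\log n/n]$. Its bias per symbol is of order $(\log n/n)^\alpha/L^{2\alpha}$, i.e.\ about $(n\log n)^{-\alpha}$. Summed over $d$ symbols and compared with $\varepsilon\,\mathrm{F}_\alpha$, this forces $n\gtrsim d^{\max\{1,1/\alpha\}}/(\varepsilon^{1/\alpha}\log d)$: for $\alpha>1$, compare with $\mathrm{F}_\alpha\ge d^{1-\alpha}$, and for $\alpha<1$ with $\mathrm{F}_\alpha\ge$ the relevant lower bound.

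For large symbols the plug-in $\hat p_i^\alpha$ has bias of order $p_i^{\alpha-1}/n$. The abstract says this is where the refinement is needed: I replace plug-in by a local polynomial approximation of $x^\alpha$ on an interval $\hat p_i\pm\sqrt{\hat p_i\log n/n}$ with a Taylor-type bias correction, so that the bias becomes $o(\varepsilon\,\mathrm{F}_\alpha)$. The variance is then $\sum_i p_i^{2\alpha-1}/n$. Bounding it by $\varepsilon^2\mathrm{F}_\alpha^2$ gives the $\frac{d^{|1-1/\alpha|}}{\varepsilon^2}$ term. This follows from the inequality $\sum p_i^{2\alpha-1}\le \mathrm{F}_\alpha^2\, d^{|1-1/\alpha|}$ up to constants, which I would verify via H\"older/power-mean with the case split $\alpha<1/2$, $1/2<\alpha<1$, $\alpha>1$. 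For $\alpha\le 1/2$ this term is dominated by the first.

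For the lower bound, the $\varepsilon^{-2}$ term comes from a two-point Le Cam argument. I perturb a near-uniform distribution, with a few heavy elements for $\alpha<1$ or $d^{1/\alpha}$-scale mass placement for $\alpha>1$, so that $\mathrm{H}_\alpha$ changes by $\varepsilon$ while the chi-square/Hellinger distance per sample is $\varepsilon^2/d^{|1-1/\alpha|}$.

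The main term comes from fuzzy hypotheses. I take two random variables $U,U'$ on $[0,\lambda]$ with $\lambda\asymp \log d/n$ whose first $L\asymp\log d$ moments match, and draw $p_i$ i.i.d.\ from $U$ or $U'$, plus a fixed mass to normalize. Poisson mixtures with matched moments are then TV-close when $n\lambda\lesssim L$. The required separation is $\mathbb{E}U^\alpha-\mathbb{E}{U'}^\alpha\gtrsim \varepsilon\,\mathrm{F}_\alpha/d$.

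The obstacle is that the standard duality construction, based on the best-approximation error of $x^\alpha$, controls the difference of $\alpha$-th moments but not the magnitude of moments. For $\alpha>1$ this lets $\mathrm{F}_\alpha$ blow up and the relative separation degrade, which is presumably why prior bounds lost $d^{\gamma}$. So I would construct $U,U'$ with all high moments bounded by constants times those of a fixed reference (say concentrated near $\lambda/2$), while the ratio $\mathbb{E}U^\alpha/\mathbb{E}{U'}^\alpha$ is a fixed constant $>1$. I would try to do this by taking the duality construction on a shifted interval $[\lambda/4,\lambda]$ via the change of variables $x\mapsto x^{\alpha}$ restricted away from $0$, plus a common atom to fix the total mass. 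Noninteger $\alpha$ guarantees a nonzero approximation error of order $L^{-2\alpha}$, which yields the $\varepsilon^{1/\alpha}$ scaling after choosing the number of symbols and $\lambda$. Optimizing parameters then gives $n\gtrsim d^{\max\{1,1/\alpha\}}/(\varepsilon^{1/\alpha}\log d)$.
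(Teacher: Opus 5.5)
The upper-bound plan is essentially the paper's route. You use a best-approximation polynomial estimator below a $\Theta(\log d)$ count threshold, a Taylor-type bias correction above it, and $\mathrm{F}_{2\alpha-1}(p)\le d^{\abs{1-1/\alpha}}\mathrm{F}_\alpha(p)^2$ for $\alpha>1/2$. The paper makes the correction concrete: it expands $x^\alpha$ to order $J=\ceil{4\alpha}+3$ about the independent count $Y_i$ and estimates $\lambda^r$ unbiasedly by $X_i^{\underline{r}}$. The $\varepsilon^{-2}$ lower bound is the known two-point argument the paper cites.

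\textbf{The gap: your moment-matching pair cannot exist.} This step is the crux of the main lower bound. On a fixed-ratio interval $[\lambda/4,\lambda]$ the function $x^\alpha$ is analytic in a neighborhood, so $E_L(x^\alpha,[\lambda/4,\lambda])\le C_\alpha\lambda^\alpha\rho^{-L}$ for some $\rho>1$. With $L\asymp\log d$ matched moments this gives $\abs{\E U^\alpha-\E U'^\alpha}\le\lambda^\alpha d^{-\Omega(1)}$, while $\E U^\alpha\ge(\lambda/4)^\alpha$. So the ratio is $1+d^{-\Omega(1)}$, not a constant. Your claim that non-integrality gives an error of order $L^{-2\alpha}$ is false on that interval; that rate comes entirely from the singularity at $0$.

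Keeping $0$ in the support does not rescue a reference concentrated near $\lambda/2$ either. For any $U,U'$ on $[0,\lambda]$ with $L$ matched moments, $\abs{\E U^\alpha-\E U'^\alpha}\le 2E_L(x^\alpha,[0,\lambda])\lesssim_\alpha\lambda^\alpha L^{-2\alpha}$. Hence a ratio $\E U^\alpha/\E U'^\alpha\ge R>1$ forces $\E U'^\alpha\lesssim_{\alpha,R}\lambda^\alpha L^{-2\alpha}$: the $\alpha$-moment must come from scale $\lambda/L^2$, not $\lambda$. As a sanity check, if your pair existed, taking $\lambda\asymp\varepsilon^{1/\alpha}/d$ for $\alpha>1$ would give an $\Omega(d\log d/\varepsilon^{1/\alpha})$ lower bound for constant $\varepsilon$, contradicting the upper bound.

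\textbf{What is missing is the paper's explicit construction.} In units of $t\asymp\lambda/K^2$, the support is $\{0,\eta^r,\dots,\eta,1^2,2^2,\dots,K^2\}$. The signed weights are $w(x)=\prod_{z\neq x}(x-z)^{-1}$, which annihilate every polynomial of degree at most $K+r-1$ by Lagrange interpolation; splitting them into positive and negative parts defines $\nu_0$ and $\nu_1$. The weight at $j^2$ is at most $2^{r+1}\eta^{r(r+1)/2}j^{-2r}$ times the weight at $0$. This keeps $\mu,\mu_b^{(2)},\mu_b^{(\alpha)},\mu_b^{(2\alpha)}$ bounded by constants. The atom $\eta^{r+1-\ceil{\alpha}}$ is charged only by $\nu_1$ and dominates every other contribution to the $\alpha$-th moment by a factor of order $\eta^{-\min\{\alpha-\floor{\alpha},\ceil{\alpha}-\alpha\}}$. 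This is exactly where non-integrality enters.

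\textbf{Two smaller points.}
\begin{enumerate}
\item For $\alpha>1$ the normalizing mass must be spread over $s\asymp d$ symbols and must track $\sum_i U_i$. The paper uses $R_U/s$ and pays $O(mt^2+\sqrt{nmt^2})$ in total variation. A single heavy atom contributes $\Theta(1)$ to $\mathrm{F}_\alpha$ and only yields $d^{1/\alpha}/(\varepsilon^{1/\alpha}\log d)$.
\item For $\alpha<1$ the uniform distribution maximizes $\mathrm{F}_\alpha$, so the bound you need is $\mathrm{F}_\alpha(p)\ge 1$. With your stated $\mathrm{F}_\alpha(p)\ge d^{1-\alpha}$, the bias calculation would give $d/(\varepsilon^{1/\alpha}\log d)$ rather than $d^{1/\alpha}/(\varepsilon^{1/\alpha}\log d)$.
\end{enumerate}
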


The sample complexity given in \cref{thm:main} is optimal up to a constant factor that depends only on $\alpha$. 
We collect the sample complexities of R\'enyi entropy estimation in \cref{tab:renyi}.

\begin{table}[!htp]
    \centering
    \caption{Sample complexity of R\'enyi entropy estimation.}
    \label{tab:renyi}
    \begin{tabular}{ccc}
        \toprule
        $\alpha$ & Sample Complexity & References \\
        \midrule
        $0 < \alpha \leq 1/2$ & $\Theta\rbra*{ \dfrac{d^{1/\alpha}}{\varepsilon^{1/\alpha} \log\rbra{d}} }$ & \cite{AOST17} \& This Work \\
        \midrule
        $1/2 < \alpha < 1$ & $\Theta\rbra*{ \dfrac{d^{1/\alpha}}{\varepsilon^{1/\alpha} \log\rbra{d}} + \dfrac{d^{1/\alpha - 1}}{\varepsilon^2} }$ & This Work \\
        \midrule
        $\alpha = 1$ (Shannon) & $\Theta\rbra*{\dfrac{d}{\varepsilon\log\rbra{d}}+\dfrac{\log^2\rbra{d}}{\varepsilon^2}}$ & \cite{JVHW15,WY16} \\
        \midrule
        $\alpha > 1, \alpha \notin \mathbb{N}$ & $\Theta\rbra*{\dfrac{d}{\varepsilon^{1/\alpha}\log\rbra{d}} + \dfrac{d^{1 - 1/\alpha}}{\varepsilon^2} }$ & This Work \\
        \midrule
        $\alpha \geq 2, \alpha \in \mathbb{N}$ & $\Theta\rbra*{\dfrac{d^{1-1/\alpha}}{\varepsilon^2}}$ & \cite{AOST17,AK26} \\
        \bottomrule
    \end{tabular}
\end{table}

We compare prior results with ours in the following. 

\paragraph{Comparison with \cite{AOST17}.}
Although the sample complexity given in \cite{AOST17} only considers the case where $d$ is sufficiently large relative to $\varepsilon$, i.e., their sample complexity holds for any $d \geq d_0\rbra{\alpha, \varepsilon}$, with a more refined analysis, one can show that the sample complexity of the (non-plug-in) estimator in \cite{AOST17} is
\begin{equation} \label{eq:complexity-AOST}
\begin{cases}
    O\rbra*{\dfrac{d^{1/\alpha}}{\varepsilon^{1/\alpha} \log\rbra{d}}}, & 0 < \alpha \leq 1/2, \\
    O\rbra*{ \min\cbra*{ \dfrac{d^{1/\alpha}}{\varepsilon^{1/\alpha} \log\rbra{d/\varepsilon}} + \exp\rbra*{\dfrac{C_\alpha}{\varepsilon}}, \dfrac{d^{1/\alpha}}{\varepsilon^{1/\alpha} \log^{1/\alpha-1}\rbra{d/\varepsilon}} } + \dfrac{d^{1/\alpha-1}}{\varepsilon^2} }, & 1/2 < \alpha < 1, \\
    O\rbra*{ \min\cbra*{ \dfrac{d}{\varepsilon^{1/\alpha} \log\rbra{d}} + \exp\rbra*{\dfrac{C_\alpha}{\varepsilon}}, \dfrac{d}{\varepsilon} } + \dfrac{d^{1-1/\alpha}}{\varepsilon^2} }, & \alpha > 1, \alpha \notin \mathbb{N},
\end{cases}
\end{equation}
where $C_\alpha > 0$ is a constant that depends on $\alpha$.\footnote{The sample complexity of the estimator in \cite{AOST17} was worked out through the use of Large Language Model.} 
In comparison, our upper bound improves theirs for all nonintegers $\alpha > 1/2$. 
The sample complexity lower bound given in \cite{AOST17} is
\[
\begin{cases}
    \Omega\rbra*{d^{1/\alpha - \gamma} + \dfrac{d^{1/\alpha-1}}{\varepsilon^2}}, & 0 < \alpha < 1, \\
    \Omega\rbra*{d^{1 - \gamma} + \dfrac{d^{1-1/\alpha}}{\varepsilon^2}}, & \alpha > 1, \alpha \notin \mathbb{N}.
\end{cases}
\]
In comparison, our lower bound improves theirs for all nonintegers $\alpha > 0$. 

\paragraph{Comparison with \cite{JVHW15}.}
Another R\'enyi entropy estimator for $1 < \alpha < 3/2$ was given in \cite[Theorem 5]{JVHW15} with sample complexity
\[
O\rbra*{\frac{d}{\varepsilon^{1/\rbra{\alpha-1}}\log\rbra{d/\varepsilon}}}.
\]
In comparison, our upper bound improves theirs for all $1 < \alpha < 3/2$.

\subsection{Techniques}

Let $p = \rbra{p_1, p_2, \dots, p_d}$ be an unknown discrete probability distribution.
We prove our results in the Poissonized sampling model, which is usually used in the functional estimation literature, e.g., \cite{JVHW15,WY16,AOST17}. 
Here, the standard sampling model is called the multinomial sampling model. 
In the \textit{multinomial sampling model}, $n$ samples from the distribution $p$, denoted as $\textsc{Multinomial}\rbra{p, n}$, are given as the input to the estimator, which is described by a sequence of $n$ independent and identical random variables $Y_1, Y_2, \dots, Y_n$ distributed from $p$. 
In the \textit{Poissonized sampling model}, (expected) $n$ samples from the distribution $p$, denoted as $\textsc{Poissonized}\rbra{p, n}$, are given as the input to the estimator, which is described by a sequence of $N$ independent and identical random variables $Y_1, Y_2, \dots, Y_N$ distributed from $p$, where $N$ obeys the Poisson distribution $\operatorname{Poi}\rbra{n}$.
Let $X_i = \sum_{j=1}^N \mathbf{1}_{\cbra{Y_j = i}}$ be the empirical count.
Then, it is noted that $X_1, X_2, \dots, X_d$ are independent and $X_i \sim \operatorname{Poi}\rbra{np_i}$. 
The following fact reveals that the sample complexity in the Poissonized sampling model is equivalent to the sample complexity in the multinomial sampling model only up to a constant factor.

\begin{fact}[{\cite[Lemmas 7 and 17]{AOST17}}, see also {\cite[Lemma 16]{JVHW15}} and {\cite[Appendix A]{WY16}}]
    Let $S_\alpha^{\textup{Mul}}\rbra{d, \varepsilon}$ and $S_\alpha^{\textup{Poi}}\rbra{d, \varepsilon}$ be the sample complexities of estimating the R\'enyi entropy of order $\alpha$ of an unknown $d$-dimensional discrete distribution $p$ to within additive error $\varepsilon$ in the multinomial sampling model and the Poissonized sampling model, respectively, with success probability at least $2/3$.
    Then, $S_\alpha^{\textup{Mul}}\rbra{d, \varepsilon} = \Theta\rbra{S_\alpha^{\textup{Poi}}\rbra{d, \varepsilon}}$. 
\end{fact}

\subsubsection{Lower bounds}

The lower bounds $\Omega\rbra{d^{1/\alpha-\gamma}}$ for $0 < \alpha < 1$ and $\Omega\rbra{d^{1-\gamma}}$ for noninteger $\alpha > 1$ for R\'enyi entropy estimation in \cite{AOST17} build on the existence of two nonnegative vectors $x, y \in \mathbb{R}^{m}$ for any $m \geq 2$ and noninteger $\alpha > 0$ such that $\Abs{x}_r = \Abs{y}_r$ for $1 \leq r < m$, $\Abs{x}_m \neq \Abs{y}_m$, and $\Abs{x}_\alpha \neq \Abs{y}_\alpha$. 
Let $S = \Abs{x}_1 = \Abs{y}_1$ for convenience. 
\begin{itemize}
    \item For noninteger $\alpha > 1$, the hard instance is a pair of distributions $\rbra{p^x, p^y}$ given by
    \begin{align*}
    p^x = \rbra[\Big]{ \underbrace{\frac{x_1}{DS}, \dots, \frac{x_1}{DS}}_{D \textup{ equal entries}}, \dots, \underbrace{\frac{x_m}{DS}, \dots, \frac{x_m}{DS}}_{D \textup{ equal entries}} },
    \end{align*}
    with $D = d/m$. 
    It can be shown that $\abs{\mathrm{H}_\alpha\rbra{p^x} - \mathrm{H}_\alpha\rbra{p^y}} = \Omega\rbra{1}$ whereas distinguishing $p^x$ and $p^y$ (after a uniformly random label permutation) requires $\Omega\rbra{d^{1-1/m}}$ samples, which gives the $\Omega\rbra{d^{1-\gamma}}$ lower bound since $m$ can be chosen arbitrarily large.

    \item For $0 < \alpha < 1$, the hard instance is a pair of distributions $\rbra{p^x, p^y}$ given by
    \begin{align*}
        p^x = \rbra[\Big]{1 - \frac{1}{D^\beta}, \underbrace{\frac{x_1}{D^{1+\beta}S}, \dots, \frac{x_1}{D^{1+\beta}S}}_{D \textup{ equal entries}}, \dots, \underbrace{\frac{x_m}{D^{1+\beta}S}, \dots, \frac{x_m}{D^{1+\beta}S}}_{D \textup{ equal entries}} },
    \end{align*}
    with $D = \rbra{d-1}/m$.
    It can be shown that $\abs{\mathrm{H}_\alpha\rbra{p^x} - \mathrm{H}_\alpha\rbra{p^y}} = \Omega\rbra{1}$ whereas distinguishing $p^x$ and $p^y$ requires $\Omega\rbra{d^{1/\alpha-1/m+\beta}}$ samples, which gives the $\Omega\rbra{d^{1/\alpha-\gamma}}$ lower bound since $m$ can be chosen arbitrarily large with $\beta = 1/\alpha - 1 - \eta/2$.
\end{itemize}

In this work, we present a different hard instance while unifying both the cases of $0 < \alpha < 1$ and noninteger $\alpha > 1$. 
To this end, our key technical lemma (\cref{lemma:mm}) is a constructive moment matching: for noninteger $\alpha > 0$, real $R > 1$, and integer $K \geq 2$, there is a pair of probability distributions $\rbra{\nu_0, \nu_1}$ on $\sbra{0, K^2}$ with $K$ matching moments and mean $\mu$ such that $\mu_1^{\rbra{\alpha}} \geq R \mu_0^{\rbra{\alpha}}$ while $\mu, \mu_b^{\rbra{2}}, \mu_b^{\rbra{\alpha}}, \mu_b^{\rbra{2\alpha}} \asymp_\alpha 1$, where $\mu_b^{\rbra{\alpha}} = \E_{X \sim \nu_b}\sbra{X^\alpha}$. 
It is worth noting that the constructed $\nu_0$ and $\nu_1$ (i) are discrete distributions over a finite set $\mathcal{X} = \cbra{0, \eta^r, \eta^{r-1}, \dots, \eta, 1^2, 2^2, \dots, K^2} \subseteq \sbra{0, K^2}$, and (ii) have constant bounded $2$-nd, $\alpha$-th, and $2\alpha$-th moments, which may be of independent interest. 

Our hard instance is then constructed as follows.
Consider the hypothesis $b \in \cbra{0, 1}$. 
Let $U_1, U_2, \dots, U_m \sim \nu_b$ be independent and identical random variables, where $\rbra{\nu_0, \nu_1}$ is chosen with $K \asymp \log\rbra{d}$ and $R \asymp_\alpha 1$. 
The probability distribution is defined by
\[
p_b = \rbra[\Big]{ \underbrace{tU_1, tU_2, \dots, tU_m}_{m}, \underbrace{\frac{R_U}{s}, \dots, \frac{R_U}{s}}_{s}, \underbrace{0, \dots, 0}_{d-m-s} }, \qquad R_U = 1 - t \sum_{i=1}^m U_i,
\]
where $m, s$ are chosen depending on $\alpha, d, \varepsilon$ and $t \asymp_\alpha \frac{\varepsilon^{1/\alpha}}{m^{1/\alpha} s^{1-1/\alpha}}$. 
Specifically, for $0 < \alpha < 1$, we choose $m = d-1$, $s = 1$, and $t \asymp_\alpha \varepsilon^{1/\alpha}/d^{1/\alpha}$; for noninteger $\alpha > 1$, $m = s = \floor{d/2}$ and $t \asymp_\alpha \varepsilon^{1/\alpha}/d$. 
In case $R_U < 0$, $p_b$ can be defined arbitrary, e.g., $p_b = \rbra{1, 0, \dots, 0}$. 
It can be shown that $\abs{\mathrm{H}_\alpha\rbra{p_0} - \mathrm{H}_\alpha\rbra{p_1}} > \varepsilon$ with high probability whereas distinguishing $p_0$ and $p_1$ requires $\Omega\rbra{\frac{1}{tK}} = \Omega\rbra{\frac{d^{\max\cbra{1/\alpha, 1}}}{\varepsilon^{1/\alpha}\log\rbra{d}}}$ samples, which gives the first term of the lower bound in \cref{thm:main}. 

\begin{remark}[Comparison with prior similar hard instances] \
\begin{itemize}
    \item \textbf{Comparison with \cite{WY16}.}
    A similar hard instance, though using different moment matching constructions, was also used for Shannon entropy estimation in \cite{WY16} with the special case of $s = 1$. 
    However, in our case, the choice of asymptotically large $s$ is necessary. 
    This is because in our construction, it is required to set $t \asymp_\alpha \frac{\varepsilon^{1/\alpha}}{m^{1/\alpha} s^{1-1/\alpha}}$.
    Therefore, a constant $s \asymp 1$ will only give a lower bound of $\frac{1}{tK} \asymp \frac{d^{1/\alpha}}{\varepsilon^{1/\alpha}\log\rbra{d}}$ for noninteger $\alpha > 1$, while $s \asymp d$ suffices to give a matching lower bound. 

    \item \textbf{Comparison with \cite{Wan26}.}
    Our hard instance, though using different moment matchings, is actually inspired by the recent lower bound hard instances for quantum property testing \cite{Wan26}, which can be understood as a diagonal quantum state.
    Here, we use a constructive moment matching for our purpose; in comparison, the moment matching used in \cite[Lemma 5.8]{Wan26} is not constructive (which relies on the Hahn-Banach theorem) and does not particularly care about bounds on high-order moments. 
\end{itemize}
\end{remark}

The second term of the lower bound in \cref{thm:main} simply follows from \cite{AOST17,Wan26}, as stated below. 

\begin{lemma}[{\cite[Theorems 15 and 26]{AOST17}} and {\cite[Lemma A.1]{Wan26}}]\label{lemma:lower-bound-partial}
    Let $\alpha > 0$ be a constant with $\alpha \neq 1$.
    For sufficiently large $d \geq 2$ and sufficiently small $\varepsilon > 0$, 
    \[
    S_\alpha\rbra{d, \varepsilon} = \Omega\rbra*{\frac{d^{\abs{1-1/\alpha}}}{\varepsilon^2}}.
    \]
\end{lemma}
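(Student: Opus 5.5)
The plan is a two-point (Le Cam) argument. We construct two distributions $p$ and $q$ over an alphabet of size $d$ such that $\abs{\mathrm{H}_\alpha\rbra{p} - \mathrm{H}_\alpha\rbra{q}} > 2\varepsilon$, while the squared Hellinger distance between $n$ samples from $p$ and $n$ samples from $q$ is $O\rbra{n\,\varepsilon^2/d^{\abs{1-1/\alpha}}}$. Any estimator achieving accuracy $\varepsilon$ with probability $2/3$ would distinguish $p$ from $q$, which forces the Hellinger distance of the product measures to be $\Omega\rbra{1}$. We then use
\[
\mathrm{H}^2\rbra{p^{\otimes n}, q^{\otimes n}} \leq n\, \mathrm{H}^2\rbra{p,q}.
\]
It therefore suffices to find $p, q$ with $\mathrm{H}^2\rbra{p,q} \lesssim \varepsilon^2/d^{\abs{1-1/\alpha}}$ and an entropy gap of order $\varepsilon$.

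\textbf{Case $\alpha > 1$.} Take $p$ with one heavy atom of mass $a = d^{-1/\alpha}$ and the remaining mass $1-a$ spread uniformly over $d-1$ symbols. For $q$, perturb the heavy atom to $a(1+\delta)$ and rescale the rest.
\begin{itemize}
    \item \emph{Moment computation.} We have $\mathrm{F}_\alpha\rbra{p} \approx a^\alpha + d^{1-\alpha} \asymp d^{-1}$, so the heavy atom contributes a constant fraction of $\mathrm{F}_\alpha$. Changing it by a factor $(1+\delta)$ changes $\log \mathrm{F}_\alpha$ by $\Theta\rbra{\delta}$, hence changes $\mathrm{H}_\alpha$ by $\Theta\rbra{\delta}$. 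Taking $\delta \asymp \varepsilon$ gives the required gap.
    \item \emph{Hellinger bound.} The heavy atom gives $\mathrm{H}^2 \asymp a\delta^2 = \varepsilon^2 d^{-1/\alpha}$. The uniform part changes by a relative amount $a\delta$ spread over mass $\approx 1$, contributing $O\rbra{a^2\delta^2}$, which is smaller.
\end{itemize}
This yields $n \gtrsim d^{1/\alpha}/\varepsilon^2$. That is weaker than the target $d^{1-1/\alpha}/\varepsilon^2$ when $\alpha > 2$, so a better configuration is needed there. The fix is to use a uniform distribution on $k$ symbols, with $k$ chosen so that the perturbation is spread over many atoms. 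Concretely, perturb half of the $d$ uniform atoms up by a factor $1+\delta$ and half down by $1-\delta$. Then $\mathrm{F}_\alpha$ changes by a relative amount $\asymp \delta^2$, and $\mathrm{H}^2 \asymp \delta^2$. This gives the entropy gap $\varepsilon$ only with $\delta^2 \asymp \varepsilon$, i.e.\ an $\varepsilon^{-1}$ rate, which is worse. I therefore expect the right instance to be the one from \cite{AOST17} (Theorems~15 and~26): a mixture of a uniform component on $d$ symbols and a heavy component of total mass $\asymp d^{-(1-1/\alpha)}$ whose size is perturbed. I would tune the heavy component's atom count and mass so that its share of $\mathrm{F}_\alpha$ is constant while its Hellinger cost is $\delta^2 d^{-(1-1/\alpha)}$. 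Balancing these two requirements is exactly the optimization I would carry out.

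\textbf{Case $0 < \alpha < 1$.} Use the mirror construction. Let $p$ put mass $1-b$ on a single symbol and spread $b$ uniformly over $d-1$ light symbols, with $b$ chosen so that the light part contributes $\Theta\rbra{1}$ of $\mathrm{F}_\alpha$, namely $d^{1-\alpha} b^\alpha \asymp 1$, i.e.\ $b \asymp d^{1-1/\alpha}$. Perturb $b$ to $b(1+\delta)$ with $\delta \asymp \varepsilon$. The entropy then moves by $\Theta\rbra{\varepsilon}$, while $\mathrm{H}^2 \asymp b\delta^2 = \varepsilon^2 d^{-(1/\alpha-1)}$. This gives $n = \Omega\rbra{d^{1/\alpha-1}/\varepsilon^2}$ directly.

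The main obstacle is the $\alpha > 1$ case: choosing the heavy component so that it carries a constant fraction of $\mathrm{F}_\alpha$ at Hellinger cost only $\varepsilon^2 d^{-(1-1/\alpha)}$. The ``sufficiently small $\varepsilon$ and large $d$'' hypotheses serve to keep all masses valid and the Taylor expansions of $\log$ accurate.
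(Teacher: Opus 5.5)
\paragraph{Case $0<\alpha<1$.} This half is fine. With $b\asymp d^{1-1/\alpha}$ the light block carries a constant share of $\mathrm{F}_\alpha$. The perturbation $b\mapsto b(1+\delta)$ moves $\mathrm{H}_\alpha$ by $\Theta(\delta)$, and the squared Hellinger distance is $\asymp b\delta^2$. This gives $d^{1/\alpha-1}/\varepsilon^2$. The paper does not prove this lemma itself; it imports it from \cite{AOST17} and \cite{Wan26}.

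\paragraph{The gap: case $\alpha>1$.} You never establish this half of the lemma. Your first computation solves the balancing condition incorrectly. For the heavy atom to carry a constant fraction of $\mathrm{F}_\alpha$, you need $a^\alpha\asymp (d-1)^{1-\alpha}$, the contribution of the uniform part, not $a^\alpha\asymp d^{-1}$.

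With $a=d^{-1/\alpha}$, your assertion $\mathrm{F}_\alpha(p)\asymp d^{-1}$ holds only for $\alpha\ge 2$. For $1<\alpha<2$ one has $\mathrm{F}_\alpha(p)\approx d^{1-\alpha}\gg d^{-1}=a^\alpha$. So the atom carries only a $\Theta(d^{\alpha-2})$ fraction, and the entropy gap is $\Theta(\delta d^{\alpha-2})$, not $\Theta(\delta)$. A sanity check exposes this: for $1<\alpha<2$ the resulting ``$n\gtrsim d^{1/\alpha}/\varepsilon^2$'' exceeds the target by a factor $d^{2/\alpha-1}$. It would contradict the upper bound of \cref{thm:main} once $\varepsilon$ is small.

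The remaining attempts do not close the gap either. For $\alpha>2$ the computation is correct but only yields $d^{1/\alpha}/\varepsilon^2$. The half-up/half-down construction has the wrong $\varepsilon$-rate. Your last paragraph only describes an optimization you would carry out. So half of the lemma remains unproved.

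\paragraph{The repair.} It is immediate and needs no tuning of the atom count. Keep your first construction but take $a=d^{1/\alpha-1}$, the same mass $d^{-|1-1/\alpha|}$ as your $b$ for $\alpha<1$. Your closing guess of a heavy component of mass $\asymp d^{-(1-1/\alpha)}$ was right, and one atom suffices.

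Then $a^\alpha=d^{1-\alpha}$, so $\mathrm{F}_\alpha(p)\approx 2d^{1-\alpha}$, with the atom carrying about half of it. Replace $a$ by $a(1+\delta)$ and rescale the uniform part. This changes $\log\mathrm{F}_\alpha$ by $\approx\alpha\delta/2$, because the induced relative change of the uniform part is $O(a\delta)=o(\delta)$ as $d\to\infty$. Hence $|\mathrm{H}_\alpha(p)-\mathrm{H}_\alpha(q)|\approx \alpha\delta/(2(\alpha-1))$.

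Meanwhile the squared Hellinger distance is $\asymp a\delta^2+a^2\delta^2\asymp \delta^2 d^{-(1-1/\alpha)}$. Taking $\delta\asymp_\alpha\varepsilon$ and using your tensorization bound gives $n=\Omega(d^{1-1/\alpha}/\varepsilon^2)$ for every $\alpha>1$. Large $d$ is used only to make $a$ small, and small $\varepsilon$ only to justify the linearizations.
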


\subsubsection{Upper bounds}

To estimate $\mathrm{H}_\alpha\rbra{p}$, the key step is to estimate the term $\mathrm{F}_\alpha\rbra{p}$ in the logarithm to a relative error. 
Let $X_i, Y_i \sim \operatorname{Poi}\rbra{np_i}$ be the empirical counts of two independent experiments, each with (expected) $n$ Poissonized samples. 
The estimator for $\mathrm{F}_\alpha\rbra{p}$ given in \cite{AOST17} is of the form
\begin{align*}
\widehat{F}^{\textup{AOST}} 
& = \sum_{i=1}^d \rbra*{ \mathbf{1}_{\cbra{Y_i \leq c_1L}} \mathsf{L}_n\rbra{X_i} + \mathbf{1}_{\cbra{Y_i > c_1L}} \mathsf{H}_n^{\textup{AOST}}\rbra{X_i} }, \\
\mathsf{L}_n\rbra{x} & = \sum_{j=0}^{c_2L} \frac{a_j \rbra{c_3L}^{\alpha - j}}{n^{\alpha}} x^{\underline{j}}, \qquad \mathsf{H}_n^{\textup{AOST}}\rbra{x} = \frac{x^\alpha}{n^\alpha},
\end{align*}
where $c_1, c_2, c_3 > 0$ are three universal constants, $c_1L$ with $L = \log\rbra{n}$ is the threshold that splits the empirical probabilities into two (small and large) parts, and $a_j$'s are the coefficients of the best approximation polynomial of degree $c_2 L$ of $x^\alpha$. 
Here, $x^{\underline{j}} = x \rbra{x-1} \dots \rbra{x-j+1}$ is the falling factorial and $x^{\underline{0}} = 1$.
It can be shown that the absolute bias of their estimator is 
\[
\abs*{ \E\sbra{\widehat{F}^{\textup{AOST}}} - \mathrm{F}_\alpha\rbra{p} } \lesssim_\alpha \frac{d}{\rbra{nL}^{\alpha}} + \frac{\mathrm{F}_\alpha\rbra{p}}{L}.
\]
Further analysis will give the sample complexity stated in \cref{eq:complexity-AOST} of the estimator in \cite{AOST17}.

In this work, we present an improved estimator of the form
\begin{align*}
    \widehat{F}^{\textup{our}} & = \sum_{i=1}^d \rbra*{ \mathbf{1}_{\cbra{Y_i \leq c_1L}} \mathsf{L}_n\rbra{X_i} + \mathbf{1}_{\cbra{Y_i > c_1L}} \mathsf{H}_n^{\textup{our}}\rbra{X_i, Y_i} }, \\
    \mathsf{H}_n^{\textup{our}}\rbra{x, y} & = \frac{1}{n^\alpha} \sum_{j=0}^J \binom{\alpha}{j} \sum_{r=0}^j \binom{j}{r} \rbra{-1}^{j-r} y^{\alpha-r} x^{\underline{r}}, \qquad J = \ceil{4\alpha} + 3.
\end{align*}
In comparison, we use the same estimator $\mathsf{L}_n$ for small probabilities, while using a different estimator $\mathsf{H}_n^{\textup{our}}$ for large probabilities. 
The intuition is that when $X \sim \operatorname{Poi}\rbra{\lambda}$, 
\[
\E\sbra*{ \mathsf{H}_n^{\textup{our}} \rbra{X, y} } = \frac{1}{n^\alpha} \sum_{j=0}^{J} \binom{\alpha}{j} \sum_{r=0}^j \binom{j}{r} \rbra{-1}^{j-r} y^{\alpha-r} \lambda^r = \sum_{j=0}^{J} \binom{\alpha}{j} y^{\alpha-j} \rbra{\lambda - y}^{j},
\]
which is the Taylor expansion of $x^{\alpha}$ at $y$ with degree $J$. 
This can be viewed as an estimator using the \textit{Taylor series bias correction} method (cf.\ \cite[Equation (86)]{JH20}); more precisely, it can be understood as a bias-corrected version of $\mathsf{H}_n^{\textup{AOST}}$.
In this way, the absolute bias can be reduced to
\[
\abs*{ \E\sbra{\widehat{F}^{\textup{our}}} - \mathrm{F}_\alpha\rbra{p} } \lesssim_\alpha \frac{d}{\rbra{nL}^{\alpha}},
\]
removing the unexpected term $\mathrm{F}_\alpha\rbra{p}/L$. 
Then, further analysis will show the sample complexity upper bound stated in \cref{thm:main}. 

\section{Preliminaries}

In this section, we list the results known in the literature that will be used later in this paper.

\subsection{Basic inequalities}

\begin{lemma}[Adapted from {\cite[Lemma 1]{AOST17}}] \label{lemma:2alpha-1-bound}
    For $\alpha > 1/2$ and $\alpha \neq 1$, $\mathrm{F}_{2\alpha-1}\rbra{p} / \rbra{\mathrm{F}_\alpha\rbra{p}}^2 \leq d^{\abs{1-1/\alpha}}$. 
\end{lemma}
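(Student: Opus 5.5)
We need to show $\mathrm{F}_{2\alpha-1}(p) \le d^{|1-1/\alpha|}\,\mathrm{F}_\alpha(p)^2$ for $\alpha>1/2$, $\alpha\ne 1$. The plan is to split into the cases $\alpha>1$ and $1/2<\alpha<1$. In each case the bound comes from comparing $\ell_r$-norms of the vector $p$ with $\sum_i p_i=1$; the power-mean and Hölder inequalities supply the factor of $d$.

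\emph{Case $\alpha>1$.} Set $M=\max_i p_i$. Since $2\alpha-1-\alpha=\alpha-1>0$, we have $\mathrm{F}_{2\alpha-1}(p)=\sum_i p_i^{\alpha}p_i^{\alpha-1}\le M^{\alpha-1}\mathrm{F}_\alpha(p)$. Also $M^\alpha\le \mathrm{F}_\alpha(p)$, so $M^{\alpha-1}\le \mathrm{F}_\alpha(p)^{(\alpha-1)/\alpha}$. It therefore suffices to show
\[
\mathrm{F}_\alpha(p)^{1+(\alpha-1)/\alpha}\le d^{1-1/\alpha}\,\mathrm{F}_\alpha(p)^2,
\quad\text{i.e.}\quad
\mathrm{F}_\alpha(p)^{-1/\alpha}\le d^{1-1/\alpha}.
\]
This is the power-mean inequality: $1=\sum_i p_i\le d^{1-1/\alpha}\,\mathrm{F}_\alpha(p)^{1/\alpha}$, which is Hölder's inequality with exponents $\alpha$ and $\alpha/(\alpha-1)$.

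\emph{Case $1/2<\alpha<1$.} Here $0<2\alpha-1<\alpha<1$, so I interpolate between exponents $2\alpha-1$ and $1$ at $\alpha$. Writing $\alpha=\theta(2\alpha-1)+(1-\theta)\cdot 1$ gives $\theta=\frac{1-\alpha}{2-2\alpha}=1/2$. Hölder's inequality (log-convexity of $r\mapsto \sum_i p_i^r$) then gives $\mathrm{F}_\alpha\le \mathrm{F}_{2\alpha-1}^{1/2}\,\mathrm{F}_1^{1/2}$, which is an upper bound on $\mathrm{F}_\alpha$ and goes the wrong way. I therefore need a different route: an upper bound on $\mathrm{F}_{2\alpha-1}$ and a lower bound on $\mathrm{F}_\alpha$.
\begin{itemize}
\item Since $2\alpha-1<1$, the power-mean inequality gives $\mathrm{F}_{2\alpha-1}(p)\le d^{1-(2\alpha-1)}=d^{2-2\alpha}$.
\item Since $\alpha<1$ and $p_i\le 1$, we have $\mathrm{F}_\alpha(p)\ge \sum_i p_i=1$.
\end{itemize}
Together these give a ratio at most $d^{2-2\alpha}$. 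The target is $d^{1/\alpha-1}$, and $2-2\alpha\le 1/\alpha-1$ is equivalent to $(2\alpha-1)(\alpha-1)\le 0$, i.e.\ $(\alpha-1)(2\alpha-1)\le 0$, which holds for $1/2<\alpha<1$. So this case is done.

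Neither case is a real obstacle. The only delicate point is checking the exponent inequality $2-2\alpha\le 1/\alpha-1$ in the second case. If a sharper route is wanted there, one can instead bound $\mathrm{F}_{2\alpha-1}$ by Hölder with exponent $\frac{\alpha}{2\alpha-1}>1$ against $\mathrm{F}_\alpha$, and then combine with the lower bound $\mathrm{F}_\alpha\ge 1$.
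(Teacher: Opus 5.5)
\textbf{The case $1/2<\alpha<1$ has a genuine gap, caused by a sign error.}

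Your case $\alpha>1$ is correct. The chain $p_i^{\alpha-1}\le M^{\alpha-1}\le \mathrm{F}_\alpha(p)^{(\alpha-1)/\alpha}$ gives $\mathrm{F}_{2\alpha-1}(p)\le \mathrm{F}_\alpha(p)^{2-1/\alpha}$, and $\mathrm{F}_\alpha(p)\ge d^{1-\alpha}$ finishes it. This is a self-contained substitute for the paper's appeal to \cite[Lemma 1(iv)]{AOST17}.

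In the case $1/2<\alpha<1$, multiplying $2-2\alpha\le 1/\alpha-1$ by $\alpha>0$ gives $0\le 2\alpha^2-3\alpha+1=(2\alpha-1)(\alpha-1)$. So the exponent comparison you need is equivalent to $(2\alpha-1)(\alpha-1)\ge 0$, not $\le 0$, and it is false for every $\alpha\in(1/2,1)$. For example, at $\alpha=3/4$ your argument gives $d^{1/2}$, whereas the statement requires $d^{1/3}$.

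Your main argument therefore proves only the weaker bound $d^{2-2\alpha}$. That bound would also spoil the optimal $d^{1/\alpha-1}/\varepsilon^2$ term the lemma is used for. The loss comes from bounding numerator and denominator separately. The bound $\mathrm{F}_{2\alpha-1}(p)\le d^{2-2\alpha}$ is attained only by the uniform distribution, where $\mathrm{F}_\alpha(p)=d^{1-\alpha}$ is far from $1$. The bound $\mathrm{F}_\alpha(p)\ge 1$ is attained only by a point mass, where $\mathrm{F}_{2\alpha-1}(p)=1$.

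The fix is the alternative you mention in your closing remark, and it is necessary rather than optional. Apply H\"older with conjugate exponents $\alpha/(2\alpha-1)$ and $\alpha/(1-\alpha)$ to $\sum_i (p_i^{\alpha})^{(2\alpha-1)/\alpha}\cdot 1$:
\[
\mathrm{F}_{2\alpha-1}(p)\le d^{(1-\alpha)/\alpha}\,\mathrm{F}_\alpha(p)^{(2\alpha-1)/\alpha},
\qquad\text{hence}\qquad
\frac{\mathrm{F}_{2\alpha-1}(p)}{\mathrm{F}_\alpha(p)^2}\le d^{1/\alpha-1}\,\mathrm{F}_\alpha(p)^{-1/\alpha}\le d^{1/\alpha-1}.
\]
The lower bound $\mathrm{F}_\alpha(p)\ge 1$ is used only in the last step. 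This is exactly the paper's route: it cites \cite[Lemma 1(iii)]{AOST17} with $\beta=1-\alpha$. The H\"older step keeps $\mathrm{F}_{2\alpha-1}$ and $\mathrm{F}_\alpha$ coupled, and only after that is the crude lower bound on $\mathrm{F}_\alpha$ harmless.
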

\begin{proof}
    For $1/2 < \alpha < 1$, this is obtained by \cite[Lemma 1(iii)]{AOST17}: $\mathrm{F}_{\alpha-\beta}\rbra{p} \leq d^{\beta/\alpha} \rbra{\mathrm{F}_\alpha\rbra{p}}^{\rbra{\alpha-\beta}/\alpha}$ for $0 \leq \beta \leq \alpha$, with $\beta = 1-\alpha$.
    For $\alpha > 1$, this is obtained by \cite[Lemma 1(iv)]{AOST17}: $\mathrm{F}_{\alpha+\beta}\rbra{p} \leq d^{\rbra{\alpha-1}\rbra{\alpha-\beta}/\alpha} \rbra{\mathrm{F}_\alpha\rbra{p}}^2$, with $\beta = \alpha - 1$. 
\end{proof}

\subsection{Best polynomial approximation}

\begin{lemma} [Cf.\ {\cite[Lemmas 5 and 6]{AOST17}}] \label{lemma:poly-approx-xalpha}
    For every noninteger $\alpha > 0$, let $Q_k\rbra{x} = \sum_{j=0}^k a_j x^j$ be a best approximation polynomial of degree $k$ of $x^{\alpha}$ on $\sbra{0, 1}$, then 
    \[
    \Abs{Q_k - x^{\alpha}}_{\infty, \sbra{0, 1}} \leq c_\alpha k^{-2\alpha}, \qquad \sum_{j=0}^k \abs{a_j} \leq c_\alpha 8^k,
    \]
    where $c_\alpha \geq 1$ is a constant depending on $\alpha$. 
\end{lemma}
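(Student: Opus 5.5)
The bound has two parts, and I would prove them separately: the approximation error $c_\alpha k^{-2\alpha}$ and the coefficient bound $c_\alpha 8^k$.

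\textbf{Approximation error.} The substitution $x = y^2$ turns $x^\alpha$ on $\sbra{0,1}$ into $\abs{y}^{2\alpha}$ on $\sbra{-1,1}$. A polynomial of degree $k$ in $x$ becomes an even polynomial of degree $2k$ in $y$. Conversely, any best approximation of the even function $\abs{y}^{2\alpha}$ can be symmetrized without increasing the error, so
\[
E_k\rbra{x^\alpha; \sbra{0,1}} = E_{2k}\rbra{\abs{y}^{2\alpha}; \sbra{-1,1}}.
\]
The classical Bernstein theorem gives $E_n\rbra{\abs{y}^{\beta}; \sbra{-1,1}} \sim \mu\rbra{\beta} n^{-\beta}$ for $\beta > 0$ not an even integer. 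Here $\beta = 2\alpha$ is not an even integer exactly because $\alpha$ is noninteger. Taking $n = 2k$ yields $\Abs{Q_k - x^\alpha}_{\infty,\sbra{0,1}} \leq c_\alpha k^{-2\alpha}$ for all $k \geq 1$. Small $k$ is absorbed into the constant, since the error is trivially at most $1$.

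If I wanted to avoid citing Bernstein, I would use a Jackson-type estimate with the Ditzian--Totik modulus instead. One computes $\omega_\varphi^{r}\rbra{x^\alpha, t} \lesssim_\alpha t^{2\alpha}$ on $\sbra{0,1}$ for $r > 2\alpha$, and then $E_k \lesssim \omega_\varphi^r\rbra{f, 1/k}$ gives the same rate. The upper bound is all that is needed.

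\textbf{Coefficient bound.} This is the step that needs care, since best approximants are not given explicitly. The plan is to control the monomial coefficients via the Chebyshev expansion.

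1. The approximant is uniformly bounded: $\Abs{Q_k}_{\infty,\sbra{0,1}} \leq 1 + c_\alpha k^{-2\alpha} \leq 2 c_\alpha$.

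2. Expand $Q_k$ in shifted Chebyshev polynomials, $Q_k\rbra{x} = \sum_{i=0}^k b_i T_i\rbra{2x-1}$. By orthogonality, $\abs{b_i} \leq 2\Abs{Q_k}_\infty$.

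3. Bound the monomial coefficients of each basis polynomial. The monomial coefficients of $T_i\rbra{2x-1}$ have absolute sum at most $T_i\rbra{3}$. This holds because substituting $x \mapsto -x$ makes all coefficients nonnegative: $\abs{T_i}\rbra{2x+1}$ evaluated at $x = 1$ equals $T_i\rbra{3}$, where $\abs{T_i}$ denotes the polynomial with absolute-valued coefficients. Moreover,
\[
T_i\rbra{3} = \frac{\rbra{3+\sqrt8}^i + \rbra{3-\sqrt8}^i}{2} \leq \rbra{3+2\sqrt2}^i \leq 5.83^i.
\]

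4. Summing over $i$,
\[
\sum_{j} \abs{a_j} \leq \sum_{i=0}^k 4 c_\alpha \cdot 5.83^i \leq c'_\alpha 8^k.
\]
Enlarging $c_\alpha$ to cover both bounds finishes the proof. This is Lemma 6 of AOST17 in essence.
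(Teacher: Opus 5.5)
Your proposal is correct. For the approximation rate you and the paper do essentially the same thing. The paper cites a Bernstein/Jackson-type estimate $E_k(x^\alpha,[0,1]) \le C_\alpha k^{-2\alpha}$ (Lubinsky). You derive the same estimate from the substitution $x=y^2$ and Bernstein's asymptotic for $|y|^{2\alpha}$ on $[-1,1]$, which is the standard route to it.

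The genuine difference is in the coefficient bound. The paper first observes that the constant polynomial $1/2$ already approximates $x^\alpha$ within $1/2$. Hence $\|Q_k\|_{\infty,[0,1]} \le 3/2$, uniformly in $\alpha$ and $k$. It then cites Sherstov's lemma $\sum_j |a_j| \le 8^k \|Q_k\|_{\infty,[0,1]}$ as a black box.

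You instead bound $\|Q_k\|$ through the error estimate, which gives an $\alpha$-dependent constant; this is harmless. You then prove the coefficient inequality yourself via the shifted Chebyshev expansion. This makes the lemma self-contained and gives the sharper base $3+2\sqrt{2}$ at the cost of a constant factor. Any base at most $8$ suffices for the later uses, e.g.\ $\sum_j |a_j| 2^j \le c_\alpha 16^K$ in Case 1.2.

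One step in your argument needs a sentence of justification: that the coefficients of $T_i(2x-1)$ alternate in sign. This holds because $T_i(1+y)$ has nonnegative Taylor coefficients in $y$. By Rolle, every derivative $T_i^{(j)}$ has all its zeros in $(-1,1)$ and a positive leading coefficient, so $T_i^{(j)}(1) > 0$. It follows that the absolute coefficient sum of $T_i(2x-1)$ equals $|T_i(-3)| = T_i(3)$. Your phrasing via ``$|T_i|(2x+1)$'' is garbled, but the claim itself is right.
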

\begin{proof}
    Note that $\Abs{Q_k - x^\alpha}_{\infty, \sbra{0, 1}} = E_k\rbra{x^{\alpha}, \sbra{0, 1}} \leq C_\alpha k^{-2\alpha}$ for some constant $C_\alpha > 0$ for any integer $k \geq 1$ (cf.\ \cite[Theorem 8.1]{Lub07}).
    On the other hand, since $p\rbra{x} \equiv 1/2$ is a polynomial of degree $0$, $\Abs{Q_k - x^\alpha}_{\infty, \sbra{0, 1}} \leq \Abs{p - x^\alpha}_{\infty, \sbra{0, 1}} \leq 1/2$, we have $\Abs{Q_k}_{\infty, \sbra{0, 1}} \leq 3/2$. 
    By \cite[Lemma 2.8]{She20}, $\sum_{j=0}^k\abs{a_j} \leq 8^k \Abs{Q_k}_{\infty, \sbra{0, 1}} \leq \rbra{3/2} 8^k$. 
    Therefore, it suffices to take $c_\alpha = \max\cbra{3/2, C_\alpha}$. 
\end{proof}

\subsection{Inequalities for Poisson distributions}

\begin{lemma}[{\cite[Lemma 3]{WY16}}] \label{lemma:WY16-lemma3}
    Let $U$ and $V$ be random variables on $\sbra{0, M}$ such that $\E\sbra{U^j} = \E\sbra{V^j}$ for $j = 1, 2, \dots, L$.
    If $L > 2eM$, then
    \[
    d_{\textup{TV}}\rbra{\E\sbra{\operatorname{Poi}\rbra{U}}, \E\sbra{\operatorname{Poi}\rbra{V}}} \leq \rbra*{\frac{2eM}{L}}^L.
    \]
\end{lemma}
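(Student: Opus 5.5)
The plan is to prove the statement exactly as Wu and Yang do, writing the total variation distance through a Taylor expansion of the Poisson probability mass function. The two mixtures $\E\sbra{\operatorname{Poi}\rbra{U}}$ and $\E\sbra{\operatorname{Poi}\rbra{V}}$ are distributions on $\mathbb{N}$, and their masses at $k$ are $\E\sbra{e^{-U}U^k/k!}$ and $\E\sbra{e^{-V}V^k/k!}$. Hence
\[
d_{\textup{TV}} = \frac{1}{2}\sum_{k\geq 0} \frac{1}{k!}\abs*{\E\sbra{e^{-U}U^k} - \E\sbra{e^{-V}V^k}}.
\]
For each $k$ we expand $e^{-u}u^k = \sum_{j\geq 0} \frac{(-1)^j}{j!} u^{k+j}$. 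Since $\E\sbra{U^m} = \E\sbra{V^m}$ for $m \leq L$, every term with $k+j \leq L$ cancels. A degree-$(L-k)$ Taylor remainder therefore gives, for $k \le L$,
\[
\abs*{\E\sbra{e^{-U}U^k} - \E\sbra{e^{-V}V^k}} \leq 2\sup_{u\in\sbra{0,M}}\sum_{j > L-k}\frac{u^{k+j}}{j!}\le 2\frac{M^{L+1}}{(L-k+1)!}e^{M}.
\]
A sharper version uses the Lagrange remainder, which gives $\frac{M^{L+1}}{(L+1-k)!}$ with no $e^M$ factor, because the derivatives of $e^{-u}$ are bounded by $1$ for $u \ge 0$.

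The next step is to split the sum over $k$. For $k > L$ we do not use moment matching at all. We bound the contribution by $\sum_{k>L} \frac{1}{k!}\rbra{\E\sbra{e^{-U}U^k} + \E\sbra{e^{-V}V^k}} \le 2\Pr\sbra{\operatorname{Poi}\rbra{M} > L}$, using monotonicity in the mean, since $u^k e^{-u}/k!$ summed over $k > L$ is the Poisson tail, which increases with $u$. A Chernoff bound gives $\Pr\sbra{\operatorname{Poi}(M)\ge L}\le e^{-M}(eM/L)^L$. For $k\le L$ the contribution is at most
\[
\sum_{k=0}^{L}\frac{M^{L+1}}{k!\,(L+1-k)!} = \frac{(2M)^{L+1}}{(L+1)!} \le \rbra*{\frac{2eM}{L+1}}^{L+1}
\]
by the binomial theorem and Stirling's bound $n!\ge (n/e)^n$.

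The last step is to add the two pieces and simplify using $L > 2eM$. That hypothesis gives $2eM/(L+1) < 1$, so the first piece is at most $\frac12 (2eM/L)^L$. The Poisson tail $(eM/L)^L = 2^{-L}(2eM/L)^L$ is smaller than the target by a factor of $2^{-L}$. Accounting for the factor $1/2$ in $d_{\textup{TV}}$, the total is at most $(2eM/L)^L$.

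The main obstacle is the constant bookkeeping in the $k\le L$ regime. A naive Taylor bound carries an $e^{M}$ factor, and that factor would need to be cancelled against the prefactor instead. I would first try the Lagrange remainder with derivatives bounded by $1$, which avoids the $e^{M}$ factor altogether. If the constants still fail to close, I would use the alternative of Wu and Yang: write the difference as $\sum_k \frac1{k!}\abs{\E[\,(e^{-U}-T_{L-k}(U))U^k]-\E[\,(e^{-V}-T_{L-k}(V))V^k]}$, where $T_{L-k}$ is the Taylor polynomial, and bound it in the same way.
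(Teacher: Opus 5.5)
Your proposal is correct. The paper gives no proof of its own and only cites Wu--Yang, and your argument (Taylor-expanding $e^{-u}u^k$, cancelling the matched moments for $k \le L$ via the Lagrange remainder, bounding $k > L$ by a Poisson tail, and summing with the binomial theorem) is essentially that standard argument. Two justifications need an extra line, though neither affects validity: $\bigl(\frac{2eM}{L+1}\bigr)^{L+1} \le e^{-1}\bigl(\frac{2eM}{L}\bigr)^{L}$ follows from $\bigl(1-\frac{1}{L+1}\bigr)^{L+1} \le e^{-1}$ together with $2eM/L<1$, not from $2eM/(L+1)<1$ alone; and the per-$k$ bound $M^{L+1}/(L+1-k)!$ on the difference needs no factor $2$ because the Lagrange remainder terms for $U$ and $V$ share the sign $(-1)^{L+1-k}$ (the final constants would absorb that factor anyway).
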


\begin{lemma}[Inequalities for Poisson distributions] \label{lemma:poi-neq}
    For independent $X, Y \sim \operatorname{Poi}\rbra{\lambda}$, integer $q \geq 1$, real $u > 0$, and $0 < v < 1$, 
    \begin{align}
        \E\sbra{X^q} & \leq q^q \max\cbra{\lambda, \lambda^q}, \label{eq:EXq-Poi} \\
        \E\sbra{\abs{X - \lambda}^q} & \leq \rbra{2q}^q \lambda^{q/2} \quad \text{for } \lambda \geq 1, \label{eq:EX-lambda-q} \\
        \E\sbra{\rbra{1+X+Y+\lambda}^q} & \leq \rbra{4q}^q \rbra{1+\lambda}^q, \label{eq:XY-Poi} \\
        \Pr\sbra{X \geq \rbra{1+u}\lambda} & \leq \rbra*{\frac{e^u}{\rbra{1+u}^{1+u}}}^{\lambda}, \label{eq:Poi-upp} \\
        \Pr\sbra{X \leq \rbra{1-v}\lambda} & \leq e^{-v^2\lambda/2}. \label{eq:Poi-low}
    \end{align}
\end{lemma}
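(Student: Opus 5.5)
We prove the five inequalities one at a time, using only standard facts about the Poisson distribution: its moment generating function $\E\sbra{e^{\theta X}} = e^{\lambda\rbra{e^\theta-1}}$, the factorial-moment identity $\E\sbra{X^{\underline{j}}} = \lambda^j$, and Stirling-number expansions. The tail bounds \eqref{eq:Poi-upp} and \eqref{eq:Poi-low} are the standard Chernoff bounds. For \eqref{eq:Poi-upp}, Markov's inequality applied to $e^{\theta X}$ gives
\[
\Pr\sbra{X \geq \rbra{1+u}\lambda} \leq \exp\rbra{\lambda\rbra{e^\theta - 1} - \theta\rbra{1+u}\lambda},
\]
and we choose $\theta = \log\rbra{1+u}$. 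For \eqref{eq:Poi-low}, the same argument with $e^{-\theta X}$ and $\theta = -\log\rbra{1-v}$ gives the bound $\rbra{e^{-v}/\rbra{1-v}^{1-v}}^\lambda$. It remains to use the elementary inequality $\rbra{1-v}\log\rbra{1-v} \geq -v + v^2/2$ on $\rbra{0,1}$, which holds because both sides vanish at $v = 0$ and the derivative of their difference is $-\log\rbra{1-v} - v \geq 0$.

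For \eqref{eq:EXq-Poi}, write $X^q = \sum_{j=1}^q S\rbra{q,j} X^{\underline{j}}$ with Stirling numbers of the second kind. This gives
\[
\E\sbra{X^q} = \sum_{j=1}^q S\rbra{q,j}\lambda^j \leq \max\cbra{\lambda,\lambda^q}\sum_{j} S\rbra{q,j} = B_q \max\cbra{\lambda,\lambda^q},
\]
where $B_q$ is the Bell number. The crude bound $B_q \leq q^q$ finishes it; this bound holds because $B_q$ counts set partitions of $\sbra{q}$, each of which is determined by a map $\sbra{q} \to \sbra{q}$ sending every element to the minimum of its block.

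For \eqref{eq:EX-lambda-q} with $\lambda \geq 1$, we first reduce to even $q$ via Jensen's inequality, $\E\abs{Z}^q \leq \rbra{\E Z^{2\lceil q/2\rceil}}^{q/(2\lceil q/2\rceil)}$, so it suffices to bound the even central moments. There are two routes. The first writes $X - \lambda$ as a sum of $\lceil\lambda\rceil$ i.i.d.\ centered Poisson variables (rescaled) and applies a Rosenthal/Marcinkiewicz–Zygmund type bound. The second, which is cleaner and the one we would pursue first, bounds the centered MGF: for $\abs{\theta} \leq 1$ we have $\log\E\sbra{e^{\theta\rbra{X-\lambda}}} = \lambda\rbra{e^\theta - 1 - \theta} \leq \lambda\theta^2$. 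This gives sub-exponential tails
\[
\Pr\sbra{\abs{X-\lambda} \geq t} \leq 2\exp\rbra{-\min\cbra{t^2/\rbra{4\lambda}, t/2}}.
\]
Integrating $q t^{q-1}$ against this tail splits into a sub-Gaussian part, of size $\lesssim \rbra{C q\lambda}^{q/2}$, and an exponential part, of size $\lesssim \rbra{Cq}^q$. Since $\lambda \geq 1$, both are at most $\rbra{2q}^q\lambda^{q/2}$ once the constants are tracked; for the exponential part this uses $q^q \leq q^q\lambda^{q/2}$.

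The main obstacle is getting the clean constant $\rbra{2q}^q$ exactly rather than $\rbra{Cq}^q$. If the tail integration does not close with constant $2$, we would instead expand the even moment combinatorially. The centered Poisson has cumulants $\kappa_j = \lambda$ for $j\ge2$, so $\E\sbra{\rbra{X-\lambda}^{q}}$ is a sum over set partitions of $\sbra{q}$ with no singletons of $\lambda^{\#\text{blocks}}$. Each such partition has at most $q/2$ blocks and the number of them is at most $B_q \leq q^q$, giving $\E\sbra{\rbra{X-\lambda}^q} \leq q^q\lambda^{q/2}$ for even $q$ and $\lambda\geq1$. For odd $q$, Jensen's inequality from $q+1$ gives $\rbra{q+1}^{q}\lambda^{q/2} \leq \rbra{2q}^q\lambda^{q/2}$. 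This is the version we would actually adopt.

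Finally, \eqref{eq:XY-Poi} follows from Minkowski's inequality in $L^q$:
\[
\Abs{1+X+Y+\lambda}_q \leq 1 + \lambda + 2\Abs{X}_q .
\]
By \eqref{eq:EXq-Poi}, $\Abs{X}_q \leq q\max\cbra{\lambda^{1/q},\lambda} \leq q\rbra{1+\lambda}$. Hence $\Abs{1+X+Y+\lambda}_q \leq \rbra{1+2q}\rbra{1+\lambda} \leq 4q\rbra{1+\lambda}$, and raising to the $q$-th power completes the proof.
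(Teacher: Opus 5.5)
Your proposal is correct. It takes a different route from the paper mainly in that it is self-contained.

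\textbf{How the paper proves it.} The paper proves this lemma almost entirely by citation:
\begin{itemize}
\item \eqref{eq:EXq-Poi} is taken from \cite{BT10};
\item \eqref{eq:EX-lambda-q} is taken from \cite{Pri11};
\item the tail bounds \eqref{eq:Poi-upp} and \eqref{eq:Poi-low} are taken from \cite{JVHW15} and \cite{MU05};
\item only \eqref{eq:XY-Poi} is actually derived, and it is derived exactly as you do it, via Minkowski plus \eqref{eq:EXq-Poi}.
\end{itemize}

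\textbf{What your proposal does instead.} You replace each citation with a derivation:
\begin{itemize}
\item For the tails, you do the Chernoff computations with $\theta=\log(1+u)$ and $\theta=-\log(1-v)$, plus the inequality $(1-v)\log(1-v)\ge -v+v^2/2$. This is precisely what the cited tail lemmas contain.
\item For \eqref{eq:EXq-Poi}, you use the identity $\E\sbra{X^q}=\sum_j S(q,j)\lambda^j$ together with the injection argument giving $B_q\le q^q$.
\item For \eqref{eq:EX-lambda-q}, you use the cumulant expansion of $X-\lambda$. Only singleton-free partitions survive, each with at most $q/2$ blocks. This gives $q^q\lambda^{q/2}$ for even $q$. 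For odd $q$, Lyapunov's inequality then gives $(q+1)^q\lambda^{q/2}\le(2q)^q\lambda^{q/2}$.
\end{itemize}

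\textbf{Checking the steps.} All of these steps are valid, including the edge cases $\lambda=0$ and $q=1$. Your constants in \eqref{eq:EX-lambda-q} are in fact slightly better than those stated.

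\textbf{Comparison.} The paper's route buys brevity. Yours makes the lemma checkable without chasing references, and it shows transparently where the $(Cq)^q$-type constants come from.

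\textbf{Minor remarks.} Once you adopt the cumulant argument, the sub-exponential MGF sketch for \eqref{eq:EX-lambda-q} is unnecessary and can be dropped. Independence of $X$ and $Y$ is never used in your derivation of \eqref{eq:XY-Poi}.
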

\begin{proof}
    \cref{eq:EXq-Poi} can be found in \cite[Corollary 2.1]{BT10} (cf.\ \cite[Equation (2)]{Ahl22}). 
    \cref{eq:EX-lambda-q} can be derived by \cite[Lemma 3.1]{Pri11}. 
    \cref{eq:XY-Poi} can be obtained from \cref{eq:EXq-Poi} with the Minkowski inequality. 
    \cref{eq:Poi-upp,eq:Poi-low} can be found in \cite[Lemma 21]{JVHW15} (see also \cite[Exercise 4.7]{MU05}).
\end{proof}

\section{Lower Bounds}

In this section, we show matching lower bounds for R\'enyi entropy estimation of noninteger order. 
To this end, we first present a constructive moment matching in \cref{sec:mm} for our purpose, and then use it to show matching lower bounds in \cref{subsec:lb}. 

\subsection{Constructive Moment Matching} \label{sec:mm}

\begin{lemma} \label{lemma:mm}
    Fix a noninteger $\alpha > 0$ and a real $R > 1$. 
    There is a constant $c_{\alpha, R} > 0$ such that for every integer $K \geq 2$, there is a $K$-moment matching $\rbra{\nu_0, \nu_1}$ on $\sbra{0, K^2}$ with mean $\mu$ such that
    \begin{align}
        c_{\alpha, R} \leq \mu, \mu_b^{\rbra{2}}, \mu_b^{\rbra{\alpha}}, \mu_b^{\rbra{2\alpha}} \leq 2, \label{eq:cond1} \\
        \mu_1^{\rbra{\alpha}} \geq R \mu_0^{\rbra{\alpha}}, \label{eq:cond2}
    \end{align}
    where $\mu_b^{\rbra{\alpha}} = \E_{X \sim \nu_b} \sbra{X^\alpha}$ for $b \in \cbra{0, 1}$. 
\end{lemma}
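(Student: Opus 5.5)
We build the pair \(\rbra{\nu_0,\nu_1}\) explicitly, supported on the finite set \(\mathcal{X}=\cbra{0,\eta^r,\eta^{r-1},\dots,\eta,1^2,2^2,\dots,K^2}\) mentioned in the introduction. The idea is to work with signed measures. Let \(\sigma=\nu_1-\nu_0\). Matching \(K\) moments (including total mass) means \(\int x^j\,d\sigma=0\) for \(0\le j\le K\), and we want \(\int x^\alpha\,d\sigma>0\) to be large relative to \(\int x^\alpha\,d\nu_0\). The positive and negative parts of a suitable \(\sigma\), each normalized and mixed with a common "filler" measure, will give \(\nu_1\) and \(\nu_0\).

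\textbf{Step 1: a signed measure on \(\cbra{1^2,\dots,K^2}\).} On the nodes \(x_k=k^2\), \(0\le k\le K\) (with \(x_0=0\)), there is a unique, up to scaling, signed measure annihilating all polynomials of degree \(\le K-1\). Its weights are the divided-difference weights \(w_k=1/\prod_{l\ne k}(x_k-x_l)\), and they alternate in sign. For \(x_k=k^2\) these can be computed in closed form:
\[
w_k=\frac{2(-1)^{K-k}}{(K-k)!\,(K+k)!}\quad\text{up to the factor }1/2\text{ at }k=0.
\]
Hence \(\sum_k w_k f(k^2)\) is the \(K\)-th divided difference of \(f\) at squares. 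For \(f(x)=x^\alpha\) this is essentially the central difference \(\sum_k (-1)^{K-k}\binom{2K}{K-k}\abs{k}^{2\alpha}\) of \(\abs{t}^{2\alpha}\), which is nonzero because \(2\alpha\) is not an even integer. Crucially, after normalizing \(\sum_k\abs{w_k}\) to \(1\), the weights concentrate on small \(k\). Consequently the \(2\)-nd, \(\alpha\)-th and \(2\alpha\)-th absolute moments of \(\abs{\sigma}\) are \(O_\alpha(1)\) uniformly in \(K\), while \(\abs{\int x^\alpha d\sigma}\ge c_\alpha>0\) uniformly in \(K\). This uniform lower bound comes from the asymptotics of the central difference, i.e.\ from the Fourier/Gamma-function identity for \(\abs{t}^{2\alpha}\). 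This one-extra-degree construction matches \(K-1\) moments; to match \(K\) we use \(K+1\) squares, at the cost of \(K\to K+1\) and support in \([0,(K+1)^2]\). We then rescale \(K\), or simply accept the slightly larger support after reindexing \(K\mapsto K-1\) so that the support stays in \([0,K^2]\).

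\textbf{Step 2: fixing the sign and the ratio \(R\).} If \(\int x^\alpha d\sigma<0\), we replace \(\sigma\) by \(-\sigma\). Write \(\sigma=\sigma^+-\sigma^-\); both parts have the same mass \(m\le 1/2\) and the same first \(K\) moments. The ratio condition \cref{eq:cond2} requires \(\int x^\alpha d\sigma^-\) to be small compared with \(\int x^\alpha d\sigma^+\), which alternating weights need not give. This is where the small points \(\eta^r,\dots,\eta\) are used. We add to both \(\nu_0,\nu_1\) a small-point correction: a signed measure \(\tau\) on \(\cbra{0,\eta,\dots,\eta^r}\) with \(r=\ceil{\alpha}+O(1)\) fixed, chosen so that \(\tau\) cancels the moments of order \(1,\dots,r'\) of \(\sigma^-\) up to a lower-order error. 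Since \(x^\alpha\) with noninteger \(\alpha\) behaves differently from the integer powers near \(0\), a geometric grid with fixed small \(\eta=\eta(\alpha,R)\) lets the \(\alpha\)-moment of \(\nu_0\) be pushed below \(1/R\) times that of \(\nu_1\) while the integer moments still agree. An alternative, simpler device, which we would try first, is to define
\[
\nu_1=\sigma^++\theta\delta_{0},\qquad \nu_0=\sigma^-+\theta\delta_0,
\]
and then amplify the ratio by iterating: apply the construction to \(x^\alpha\) minus its degree-\(r\) Taylor-type part on the \(\eta\)-grid. The precise \(\eta\)-grid bookkeeping is the step we expect to be the main obstacle, namely obtaining the ratio \(R\) while keeping all the constants independent of \(K\).

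\textbf{Step 3: normalization and lower bounds.} We put the remaining mass at \(0\) so that both measures are probability measures. This preserves matched moments, since the added atom at \(0\) is common to both and contributes nothing to moments of positive order. We then scale all points by a constant (which must be \(\le 1\) to stay within \([0,K^2]\), or else absorb it into \(\eta\)) so that \(\mu,\mu_b^{(2)},\mu_b^{(\alpha)},\mu_b^{(2\alpha)}\le 2\). The lower bound \(c_{\alpha,R}\) on each of these quantities follows because every nonzero atom has mass and location bounded below by constants depending only on \(\alpha,R\): the atom at \(1^2\) carries weight \(\asymp_\alpha 1\) in \(\sigma^\pm\) by the concentration from Step 1, and the \(\eta\)-atoms have fixed positions. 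This gives \cref{eq:cond1}, and the ratio obtained in Step 2 gives \cref{eq:cond2}.
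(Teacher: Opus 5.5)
There is a genuine gap, and it sits in Step 1, which everything else rests on. Your own weights give $\abs{w_k}/\abs{w_0}=2\binom{2K}{K+k}/\binom{2K}{K}$ on $\cbra{0,1^2,\dots,K^2}$, and this tends to $2$ for each fixed $k$. After normalizing the total mass, the profile is the centered binomial $4^{-K}\binom{2K}{K+k}$, $\abs{k}\le K$, which has width $\sqrt{K}$. So the weights do \emph{not} concentrate on small $k$, and three things follow:
\begin{itemize}
\item the atom at $1^2$ has mass $\asymp K^{-1/2}$;
\item $\int x^q\,d\abs{\sigma}\asymp K^q$;
\item the central-difference/Fourier asymptotics you invoke give $\abs{\int x^\alpha\,d\sigma}=4^{-K}\abs{\sum_{k}(-1)^{K-k}\binom{2K}{K-k}\abs{k}^{2\alpha}}\asymp K^{-1/2}$, not $\ge c_\alpha$, because $4^{-K}(2\sin(\xi/2))^{2K}$ is concentrated in windows of width $K^{-1/2}$.
\end{itemize}
Hence $\sigma^\pm$ have $\alpha$-moments of order $K^\alpha$ that differ by only $O(K^{-1/2})$, so their ratio tends to $1$. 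No rescaling repairs this. A $K$-independent scaling leaves the $K^q$ growth. Scaling by $1/K$ bounds the moments but leaves the ratio at $1+O(K^{-1/2-\alpha})$.

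Step 2 cannot fix it either. The difference $\mu_1^{(\alpha)}-\mu_0^{(\alpha)}=\int x^\alpha\,d\sigma$ is unchanged by any measure added to both $\nu_0$ and $\nu_1$. Moreover, a common correction on the $\eta$-points must be nonnegative there, since $\sigma^\pm$ carry no mass there, so it can only dilute the ratio toward $1$. You flag the ratio step as the main obstacle; with $\sigma$ built from the squares it is not surmountable.

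The missing idea is to put the $\eta$-cluster into the \emph{same} annihilating measure and let it, not the squares, carry the mass. The paper takes $w(x)=\prod_{z\in\mathcal{X}\setminus\cbra{x}}(x-z)^{-1}$ on all of $\mathcal{X}$ and lets $\nu_1,\nu_0$ be exactly its normalized positive and negative parts, with no common filler. Because $0,\eta^r,\dots,\eta$ act like a nearly $(r+1)$-fold node, $\abs{w(0)}=\eta^{-r(r+1)/2}(K!)^{-2}$ dominates:
\begin{itemize}
\item $\abs{w(j^2)}/\abs{w(0)}\le 2^{r+1}\eta^{r(r+1)/2}j^{-2r}$, so the squares contribute only $O(\eta^{r(r+1)/2})$, uniformly in $K$, to every moment of order $q\le\max\cbra{2,2\alpha}$ (this is why $r=\ceil{2\alpha}+3$);
\item $\abs{w(\eta^{r+1-\ell})}/\abs{w(0)}=\eta^{\ell(\ell-1)/2}W_\ell$ with $1\le W_\ell<16$.
\end{itemize}
The $\alpha$-moment contributions $\eta^{\ell(\ell-1)/2+\alpha(r+1-\ell)}$ then form a discrete quadratic in $\ell$ with unique maximizer $\ell=\ceil{\alpha}$. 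Every other term, and the squares, is smaller by a factor $\eta^{\min\cbra{\alpha-\floor{\alpha},\ceil{\alpha}-\alpha}}$ up to constants; this is exactly where non-integrality of $\alpha$ enters. Fixing the global sign so that $x_*=\eta^{r+1-\ceil{\alpha}}$ lies in the support of $\nu_1$, and taking $\eta$ small in terms of $R$, gives $\mu_1^{(\alpha)}\ge R\,\mu_0^{(\alpha)}$. The lower bound $c_{\alpha,R}$ comes from the adjacent atoms $\eta^r,\eta^{r-1}$: their weights have opposite signs, so each $\nu_b$ receives one of them with mass at least $\eta/12$.
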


\begin{proof}
    Take $c_{\alpha, R} = \eta^{1+ 2r \max\cbra{\alpha, 1}} / 12$, where $r = \ceil{2\alpha} + 3 \geq 4$ and $\eta = \rbra{2^{r+3} R}^{-1/\min\cbra{\alpha - \floor{\alpha}, \ceil{\alpha} - \alpha}} < 1/4$.
    For every $K \geq 2$, we construct a $K$-moment matching $\rbra{\nu_0, \nu_1}$ supported on the set
    \[
    \mathcal{X} = \cbra[\Big]{0, \underbrace{\eta^r, \eta^{r-1}, \dots, \eta}_{r \textup{ elements}}, \underbrace{1^2, 2^2, \dots, K^2}_{K \textup{ elements}}} \subseteq \sbra{0, K^2}
    \]
    such that for each $x \in \mathcal{X}$ and $b \in \cbra{0, 1}$, 
    \[
    \nu_b\rbra{x} = \frac{1}{Z} \max\cbra*{\rbra{-1}^{K+r-\ceil{\alpha}+1-b} w\rbra{x}, 0}, \qquad 
    w\rbra{x} = \prod_{z \in \mathcal{X} \setminus \cbra{x}} \frac{1}{x - z}, \qquad Z = \frac{1}{2} \sum_{x \in \mathcal{X}} \abs{w\rbra{x}}.
    \]
    Lagrange interpolation gives that 
    \[
    \sum_{x \in \mathcal{X}} w\rbra{x} x^j = 0, \qquad 0 \leq j \leq \abs{\mathcal{X}} - 2 = K+r-1,
    \]
    which ensures that $\rbra{\nu_0, \nu_1}$ is a $K$-moment matching. 
    Moreover, direct calculation gives 
    \begin{align}
        \abs{w\rbra{0}} & = \frac{1}{\eta^{r\rbra{r+1}/2} \rbra{K!}^2}, \label{eq:w0} \\
        \abs{w\rbra{j^2}} & = \frac{2}{\rbra{K-j}! \rbra{K+j}! \prod_{\ell=1}^r \rbra{j^2 - \eta^{r+1-\ell}}} \leq \frac{2^{r+1}}{\rbra{K!}^2 j^{2r}}, & 1 \leq j \leq K, \label{eq:wj2} \\
        \frac{\abs{w\rbra{\eta^{r+1-\ell}}}}{\abs{w\rbra{0}}} & = \eta^{\ell\rbra{\ell-1}/2} W_\ell, \qquad W_\ell = \prod_{\substack{1 \leq h \leq r \\ h \neq \ell}} \frac{1}{1 - \eta^{\abs{h - \ell}}} \prod_{j=1}^K \frac{1}{1 - \eta^{r+1-\ell}/j^2}, & 1 \leq \ell \leq r. \nonumber
    \end{align}
    Since $r \geq 4$ and $0 < \eta < 1/4$, we have 
    \begin{align*}
        1 & \leq W_\ell \leq \exp\rbra*{\frac{4\eta}{1-\eta} + 4\eta} < 16, \\
        1 & \leq \frac{Z}{\abs{w\rbra{0}}} \leq \frac{1}{2} \rbra*{1 + \frac{16}{1-\eta} + 2^{r+2} \eta^{r\rbra{r+1}/2}} < 12.
    \end{align*}

    We first show \cref{eq:cond1}. 
    For every $0 < q \leq \max\cbra{2, 2\alpha}$, 
    \begin{align*}
    \mu_b^{\rbra{q}} 
    & = \E_{X \sim \nu_b} \sbra{X^q} = \sum_{\ell=1}^r \nu_b\rbra{\eta^{r+1-\ell}} \eta^{q\rbra{r+1-\ell}} + \sum_{j=1}^K \nu_b\rbra{j^2} j^{2q} \\
    & \leq \sum_{\ell=1}^r \nu_b\rbra{\eta^{r+1-\ell}} + \frac{1}{Z} \sum_{j=1}^K \abs{w\rbra{j^2}} j^{2q} \leq 1 + 1 = 2,
    \end{align*}
    where we use the fact by \cref{eq:w0,eq:wj2} that
    \[
    \frac{1}{Z} \sum_{j=1}^K \abs{w\rbra{j^2}} j^{2q} \leq 2^{r+1} \eta^{r\rbra{r+1}/2} \sum_{j=1}^K j^{-2\rbra{r-q}} \leq 2^{r+2} \eta^{r\rbra{r+1}/2} \leq 1.
    \]
    On the other hand, 
    \[
    \mu_b^{\rbra{q}} \geq \eta^{rq} \nu_b\rbra{\sbra{\eta^r, 1}} \geq \frac{\eta^{rq+1}}{12},
    \]
    where we use the fact that $w\rbra{\eta^r} w\rbra{\eta^{r-1}} < 0$ and thus
    \[
    \nu_b\rbra{\sbra{\eta^r, 1}} \geq \nu_b\rbra{\eta^r} + \nu_b\rbra{\eta^{r-1}} \geq \min\cbra*{\frac{\abs{w\rbra{\eta^r}}}{Z}, 
    \frac{\abs{w\rbra{\eta^{r-1}}}}{Z}} \geq \frac{1}{12} \min\cbra*{\frac{\abs{w\rbra{\eta^r}}}{\abs{w\rbra{0}}}, 
    \frac{\abs{w\rbra{\eta^{r-1}}}}{\abs{w\rbra{0}}}} \geq \frac{\eta}{12}. 
    \]

    Then, we show \cref{eq:cond2}. 
    Let $x_* = \eta^{r+1-\ceil{\alpha}}$.
    Note that $\nu_0\rbra{x_*} = 0$ and $\nu_1\rbra{x_*} = \abs{w\rbra{x_*}}/Z > 0$.
    Then, 
    \begin{align*}
        \frac{\mu_0^{\rbra{\alpha}}}{\mu_1^{\rbra{\alpha}}} 
        & = \frac{\sum_{x \in \mathcal{X} \setminus \cbra{0, x_*}} \nu_0\rbra{x} x^\alpha}{ \nu_1\rbra{x_*} x_*^\alpha + \sum_{x \in \mathcal{X} \setminus \cbra{0, x_*}} \nu_1\rbra{x} x^\alpha} 
        \leq \frac{\sum_{x \in \mathcal{X} \setminus \cbra{0, x_*}} \abs{w\rbra{x}} x^\alpha}{ \abs{w\rbra{x_*}} x_*^\alpha } \\
        & = \frac{1}{\abs{w\rbra{x_*}} x_*^\alpha} \rbra*{\sum_{\substack{1 \leq \ell \leq r \\ \ell \neq \ceil{\alpha}}} \abs{w\rbra{\eta^{r+1-\ell}}} \eta^{\alpha\rbra{r+1-\ell}} + \sum_{j=1}^K \abs{w\rbra{j^2}} j^{2\alpha} }.
    \end{align*}
    Dividing both numerator and denominator by $\abs{w\rbra{0}}$ and using the fact that $1 \leq W_\ell < 16$ gives
    \begin{align*}
        \frac{\mu_0^{\rbra{\alpha}}}{\mu_1^{\rbra{\alpha}}}
        & \leq \sum_{\substack{1 \leq \ell \leq r \\ \ell \neq \ceil{\alpha}}}
        \frac{W_\ell}{W_{\ceil{\alpha}}}
        \eta^{\rbra{\ell-\ceil{\alpha}}\rbra{\frac{\ell+\ceil{\alpha}-1}{2}-\alpha}} + \frac{2^{r+1}}{W_{\ceil{\alpha}}}
        \eta^{\rbra{r+1-\ceil{\alpha}}\rbra{\frac{r+\ceil{\alpha}}{2}-\alpha}}
        \sum_{j=1}^K j^{-2\rbra{r-\alpha}} \\
        & \leq \rbra*{16\rbra{r-1}+2^{r+2}}
        \eta^{\min\cbra{\alpha-\floor{\alpha},\ceil{\alpha}-\alpha}} \leq \frac{1}{R},
    \end{align*}
    which gives $\mu_1^{\rbra{\alpha}} \geq R \mu_0^{\rbra{\alpha}}$. 
\end{proof}

\subsection{The lower bound} \label{subsec:lb}

\begin{theorem} \label{thm:lower-bound}
    Fix noninteger $\alpha > 0$. 
    For any sufficiently large $d \geq 2$ and any sufficiently small $0 < \varepsilon \leq 1/16$, estimating the $\alpha$-R\'enyi entropy of an unknown $d$-dimensional discrete distribution to within additive error $\varepsilon$ requires sample complexity $\Omega\rbra{\frac{d^{\max\cbra{1/\alpha, 1}}}{\varepsilon^{1/\alpha}\log\rbra{d}}}$. 

    Combining the lower bounds in \cref{lemma:lower-bound-partial}, we yield the sample complexity lower bound 
    \[
    \Omega\rbra*{\frac{d^{\max\cbra{1/\alpha, 1}}}{\varepsilon^{1/\alpha}\log\rbra{d}} + \frac{d^{\abs{1-1/\alpha}}}{\varepsilon^2}}.
    \]
\end{theorem}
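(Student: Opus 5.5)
We follow the standard two-fuzzy-hypotheses (Le Cam) approach, using the hard instance from the Techniques section.

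\textbf{Construction.} Take the $K$-moment matching $(\nu_0,\nu_1)$ of \cref{lemma:mm} with $K = \lceil c\log(d)\rceil$ for a small constant $c>0$. Choose the constant $R=R(\alpha)>1$ large enough that $\mathrm{F}_\alpha$ changes by a constant factor; it is fixed in the entropy-gap step below. Draw $U_1,\dots,U_m$ i.i.d.\ from $\nu_b$ and set
\[
p_b=\bigl(tU_1,\dots,tU_m,\tfrac{R_U}{s},\dots,\tfrac{R_U}{s},0,\dots,0\bigr),\qquad R_U = 1-t\sum_i U_i .
\]
Use $(m,s)=(d-1,1)$ when $\alpha<1$ and $m=s=\lfloor d/2\rfloor$ when $\alpha>1$. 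Take $t=\kappa\,\varepsilon^{1/\alpha}/(m^{1/\alpha}s^{1-1/\alpha})$ for a suitable constant $\kappa=\kappa(\alpha)$.

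\textbf{Step 1: $p_b$ is a valid distribution with high probability.} Since $\mu\le 2$ and $tm\lesssim \varepsilon^{1/\alpha}(s/m)^{1-1/\alpha}\cdot m^{0}$, we get $tm\mu$ small in both regimes. Indeed, for $\alpha<1$ it equals $\kappa\varepsilon^{1/\alpha} m^{1-1/\alpha}\ll1$, and for $\alpha>1$ with $s=m$ it equals $\kappa\varepsilon^{1/\alpha}$. The variance is $t^2m\mu^{(2)}$, which is tiny, so Chebyshev gives $R_U\in[1/2,1]$ with probability $1-o(1)$.

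\textbf{Step 2: entropy gap.} Write
\[
\mathrm{F}_\alpha(p_b)=t^\alpha\sum_i U_i^\alpha + s^{1-\alpha}R_U^\alpha .
\]
The first term has mean $t^\alpha m\mu_b^{(\alpha)}$ and variance $t^{2\alpha}m\mu_b^{(2\alpha)}$. Both moments are constants by \cref{eq:cond1}, so the first term concentrates at $t^\alpha m\mu_b^{(\alpha)}\asymp \kappa^\alpha\varepsilon\, s^{1-\alpha}$ up to relative error $m^{-1/2}$.

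The second term concentrates around $s^{1-\alpha}(1-tm\mu)^\alpha$, identically under both hypotheses because the means agree. Its fluctuation is $O(s^{1-\alpha}tm^{1/2})$, which is negligible relative to $\varepsilon s^{1-\alpha}$ for large $d$; for $\alpha>1$ this requires $t\sqrt m\ll\varepsilon$, which holds when $\varepsilon\gg d^{-\Theta(1)}$, and otherwise the second term of the bound dominates anyway.

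Hence the ratio $\mathrm{F}_\alpha(p_1)/\mathrm{F}_\alpha(p_0)$ equals
\[
\frac{1+\kappa'\varepsilon\mu_1^{(\alpha)}}{1+\kappa'\varepsilon\mu_0^{(\alpha)}}\cdot(1+o(\varepsilon)),
\]
where the $o(\varepsilon)$ factor is common to both hypotheses. Using $\mu_1^{(\alpha)}-\mu_0^{(\alpha)}\ge (1-1/R)\mu_1^{(\alpha)}\ge (1-1/R)c_{\alpha,R}$, the log-ratio is $\gtrsim \kappa'\varepsilon c_{\alpha,R}$. Dividing by $|1-\alpha|$, we choose $\kappa$ so that $|\mathrm{H}_\alpha(p_0)-\mathrm{H}_\alpha(p_1)|>2\varepsilon$ with probability $1-o(1)$. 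The condition $\varepsilon\le1/16$ keeps the linearization of $\log$ valid.

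\textbf{Step 3: indistinguishability.} Under Poissonization with $n$ samples, the counts of the $m$ small symbols are i.i.d.\ mixtures $\E[\mathrm{Poi}(ntU)]$ with $ntU\in[0,ntK^2]$. The counts on the $s$ heavy symbols depend on $R_U$. To handle this, condition on $R_U$, or better use the approach of \cite{WY16}: replace the exact normalization with a fixed mass $1-tm\mu$ and argue the total-variation cost is small. With $n=c'/(tK)$ we have $M=ntK^2=c'K$. Since $K>2eM$ for small $c'$, \cref{lemma:WY16-lemma3} gives per-symbol TV at most $(2ec')^{K}\le d^{-2}$ for an appropriate $c$. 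Summing over $m\le d$ symbols gives $o(1)$ total. Together with Steps 1–2 and Le Cam's method on the fuzzy hypotheses, this yields $n=\Omega(1/(tK))$, which equals $\Omega\bigl(d^{\max\{1/\alpha,1\}}/(\varepsilon^{1/\alpha}\log d)\bigr)$ in both regimes of $(m,s)$. Combined with \cref{lemma:lower-bound-partial}, this gives the full bound.

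\textbf{Main obstacle.} The delicate point is the coupling between the heavy block $R_U/s$ and the $U_i$, both in Step 2 (controlling fluctuations of $R_U^\alpha$ against an $\varepsilon$-size signal) and in Step 3, where exact independence fails. My first attempt would be to make the heavy mass deterministic, $(1-tm\mu)/s$, and let $p_b$ be a non-normalized measure with total mass $1\pm O(t\sqrt m)$. I would then invoke the standard result of \cite{WY16} that Poissonized estimation is robust to near-normalized vectors; failing that, I would condition on $\sum_i U_i$ lying in a narrow window.
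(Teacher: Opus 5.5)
Your plan follows the paper's proof: the same instance $p(U)$ with $(m,s)=(d-1,1)$ or $(\floor{d/2},\floor{d/2})$, the same scaling $t\asymp\varepsilon^{1/\alpha}/(m^{1/\alpha}s^{1-1/\alpha})$, \cref{lemma:mm} with $K\asymp\log d$, Chebyshev for the concentration of $\mathrm{F}_\alpha$, and \cref{lemma:WY16-lemma3} applied symbol by symbol. The step you do not carry out is the one you flag as the main obstacle: the heavy block. You name the right device, namely comparing with the law in which the heavy mass is the deterministic $(1-mt\mu)/s$. But you only assert that its total-variation cost is small.

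Of the fallbacks you list, conditioning on $R_U$, or on $\sum_i U_i$ lying in a window, does not work as stated. It makes the $U_i$ dependent, so the light-symbol counts are no longer distributed as $\rbra{\E_{U\sim\nu_b}\operatorname{Poi}(ntU)}^{\otimes m}$, and \cref{lemma:WY16-lemma3} no longer applies coordinatewise. The approximate-distribution route of \cite{WY16} is viable in principle, but its reduction would have to be redone for $\mathrm{H}_\alpha$. You also need a convention for $p(U)$ when $R_U$ is small or negative; the paper uses the point mass $(1,0,\dots,0)$ when $R_U<1/2$. The probability of that event, at most $24mt^2$, enters both the entropy-concentration bound and the TV bound.

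The missing computation is short, and it is what the paper does.
\begin{itemize}
\item Set $Q_b^{(n)}=\rbra{\E_{U\sim\nu_b}\operatorname{Poi}(ntU)}^{\otimes m}\otimes\operatorname{Poi}(n(1-mt\mu)/s)^{\otimes s}\otimes\operatorname{Poi}(0)^{\otimes(d-m-s)}$.
\item Use $d_{\textup{TV}}(P_0,P_1)\le d_{\textup{TV}}(P_0,Q_0)+d_{\textup{TV}}(Q_0,Q_1)+d_{\textup{TV}}(Q_1,P_1)$.
\item On $\{R_U\ge1/2\}$, $Q_b$ is a mixture over the same $U$ with the same light-symbol part. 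Convexity therefore reduces $d_{\textup{TV}}(P_b,Q_b)$ to $\Pr[R_U<1/2]$ plus the expected TV of the heavy block alone.
\item For that term, use Pinsker together with $D_{\textup{KL}}(\operatorname{Poi}(a)\|\operatorname{Poi}(b))\le(a-b)^2/b$. The KL of the $s$-fold product equals that of the sums. This bounds the heavy-block term by $\sqrt{n/(2(1-mt\mu))}\,\E\abs{R_U-\E R_U}\le 2\sqrt{nmt^2}$.
\end{itemize}

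What you did not check is that this is small at $n\asymp 1/(tK)$. Here $nmt^2\asymp mt/K$, and $mt\le 1/(5\mu)\le 1/(5c_{\alpha,R})$ by your Step 1 together with $\mu\ge c_{\alpha,R}$. So the cost is $O(K^{-1/2})=o(1)$. Intuitively, the heavy block's mean fluctuates by $nt\sqrt m$, which is below its Poisson noise $\sqrt n$.

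A minor difference elsewhere: your linearization in Step 2 with a large $\kappa$ needs $\varepsilon\le\varepsilon_0(\alpha)$, which the statement permits. The paper instead fixes $t$ so that $f_1=e^{8\abs{1-\alpha}\varepsilon}f_0$ exactly. For $\alpha>1$ it then uses $\mu^\alpha\le\mu_0^{(\alpha)}$ and a large $R$ to get $mt\mu\le 1/5$ for every $\varepsilon\le 1/16$.
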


\begin{proof}
    When $\alpha > 1$ and $\frac{d^{1-1/\alpha}}{\varepsilon^2} \geq \frac{d}{\varepsilon^{1/\alpha} \log\rbra{d}}$, the desired lower bound is already given by \cref{lemma:lower-bound-partial}. 
    In the rest of the proof, we only have to consider the case where $0 < \alpha < 1$ and the case where $\alpha > 1$ and $\frac{d^{1-1/\alpha}}{\varepsilon^2} < \frac{d}{\varepsilon^{1/\alpha} \log\rbra{d}}$.
    
    Let $\rbra{\nu_0, \nu_1}$ be the $K$-moment matching on $\sbra{0, K^2}$ with mean $\mu$ given in \cref{lemma:mm} with 
    \[
    K \coloneqq \ceil{4 \log\rbra{ed}}, \qquad R \coloneqq \rbra{2+4^\alpha} e^{\abs{1-\alpha}/2}, 
    \]
    such that $c_{\alpha, R} \leq \mu, \mu_b^{\rbra{2}}, \mu_b^{\rbra{\alpha}}, \mu_b^{\rbra{2\alpha}} \leq 2$ and $\mu_1^{\rbra{\alpha}} \geq R \mu_0^{\rbra{\alpha}}$, 
    where $\mu_b^{\rbra{q}} = \E_{X \sim \nu_b} \sbra{X^q}$ for $b \in \cbra{0, 1}$. 
    
    Now we construct the hard instance.
    Let
    \begin{align*}
    \rbra{m, s} & = \begin{cases}
        \rbra{d-1, 1}, & 0 < \alpha < 1, \\
        \rbra{\floor{d/2}, \floor{d/2}}, & \alpha > 1,
    \end{cases} \\
    t & = \rbra*{ m\mu + m^{1/\alpha}s^{1-1/\alpha}\rbra*{ \frac{\mu_1^{\rbra{\alpha}} - e^{8\abs{1-\alpha}\varepsilon}  \mu_0^{\rbra{\alpha}}}{e^{8\abs{1-\alpha}\varepsilon} - 1} }^{1/\alpha} }^{-1} = \Theta\rbra*{\frac{\varepsilon^{1/\alpha}}{m^{1/\alpha} s^{1-1/\alpha}}} = \Theta\rbra*{\frac{\varepsilon^{1/\alpha}}{d^{\max\cbra{1/\alpha, 1}}}}.
    \end{align*}
    For $b \in \cbra{0, 1}$, let $U_1, U_2, \dots, U_m \sim \nu_b$ be independent and identical random variables. 
    Denote $R_U = 1 - t\sum_{i=1}^m U_i$. 
    Define the probability distribution:
    \[
    p\rbra{U} = \begin{cases}
        \rbra{ \underbrace{tU_1, tU_2, \dots, tU_m}_{m}, \underbrace{R_U/s, \dots, R_U/s}_{s}, \underbrace{0, \dots, 0}_{d-m-s} }, & R_U \geq 1/2, \\
        \rbra{ 1, \underbrace{0, \dots, 0}_{d-1} }, & R_U < 1/2.
    \end{cases}
    \]
    This hard instance is inspired by the one for quantum estimation problems in \cite{Wan26}, which generalizes the one used in \cite{WY16} where $s = 1$.
    Consider the hypothesis testing problem: determine a hypothesis $b \in \cbra{0, 1}$, each case with equal probability, given (Poissonized) sampling access to the unknown distribution $p\rbra{U}$. 
    Then, for sufficiently large $d \geq 2$, there exist $h_0, h_1 \in \mathbb{R}$ with $\abs{h_0-h_1} = 8\varepsilon$ such that for every $b \in \cbra{0, 1}$, 
    \begin{enumerate}[label={\upshape(\arabic*)}, topsep=0.33em, itemsep=0.33em, parsep=0.33em]
        \item $\Pr\sbra{ \abs{\mathrm{H}_\alpha\rbra{p\rbra{U}} - h_b} > \varepsilon \,|\, b } \leq 0.01$. \label{cond1}
        \item $d_{\textup{TV}}\rbra{P_0^{\rbra{n}}, P_1^{\rbra{n}}} < 0.1$ for all $0 \leq n \leq c_{\alpha,R}/\rbra{2^{16}tK}$, where $P_b^{\rbra{n}} = \E\sbra{ \bigotimes_{i=1}^d\operatorname{Poi}\rbra{np_i\rbra{U}} \,|\, b}$. \label{cond2}
    \end{enumerate}
    Condition \ref{cond1} means that an $\alpha$-R\'enyi entropy estimator within additive error $\varepsilon$ can be used to determine $b$ with success probability at least $2/3 - 0.01$, whereas Condition \ref{cond2} means that $n$ Poissonized samples will determine $b$ with success probability at most $0.55$, giving a contradiction. 
    Therefore, any $\alpha$-R\'enyi entropy estimator within additive error $\varepsilon$ requires sample complexity at least
    \[
    \frac{c_{\alpha, R}}{2^{16} t K} \geq \Omega\rbra*{\frac{d^{\max\cbra{1/\alpha, 1}}}{\varepsilon^{1/\alpha}\log\rbra{d}}}.
    \]

    To complete the proof, it remains to prove Conditions \ref{cond1} and \ref{cond2}.
    For Condition \ref{cond1}, we take
    \[
    h_b = \frac{1}{1-\alpha} \log\rbra*{f_b}, \qquad f_b = mt^{\alpha} \mu_b^{\rbra{\alpha}} + s^{1-\alpha}\rbra{1-mt\mu}^\alpha .
    \]
    It can be verified that $f_1 = e^{8\abs{1-\alpha}\varepsilon} f_0$ and thus $\abs{h_0 - h_1} = 8\varepsilon$. 
    The probability $\Pr\sbra{ \abs{\mathrm{H}_\alpha\rbra{p\rbra{U}} - h_b} > \varepsilon \,|\, b }$ can be split into two cases (whether $R_U \geq 1/2$ or not):
    \begin{align} \label{eq:prob-split}
        \Pr\sbra*{ \abs{\mathrm{H}_\alpha\rbra{p\rbra{U}} - h_b} > \varepsilon \mid b }
        & \leq \Pr\sbra*{ R_U < 1/2 \mid b } + \Pr\sbra*{ R_U \geq 1/2, \abs{\mathrm{H}_\alpha\rbra{p\rbra{U}} - h_b} > \varepsilon \mid b }.
    \end{align}
    For the first term of \cref{eq:prob-split}, Chebyshev's inequality gives 
    \begin{align}
    \Pr\sbra*{ R_U < 1/2 \mid b } 
    & \leq \Pr\sbra*{ \abs*{R_U - \E\sbra*{R_U}} \geq \E\sbra*{R_U} - 1/2 \mid b } \nonumber \\
    & \leq \frac{\Var\sbra*{R_U \mid b}}{\rbra{\E\sbra*{R_U} - 1/2}^2} \leq 24mt^2, \label{eq:prob-RU<1/2}
    \end{align}
    where we use the fact that $\Var\sbra*{R_U | b} = t^2 \sum_{i=1}^m \Var\sbra{U_i | b} \leq m t^2 \mu_b^{\rbra{2}} \leq 2mt^2$ and $\E\sbra{R_U} = 1 - mt\mu \geq 4/5$.
    To see that $1 - mt\mu \geq 4/5$, according to the definition of $t$, we have
    \[
    t \leq \frac{W\varepsilon^{1/\alpha}}{m^{1/\alpha}s^{1-1/\alpha}}, \qquad W = \max\cbra*{\rbra*{\frac{8\abs{1-\alpha}e^{\abs{1-\alpha}/2}}{c_{\alpha, R}}}^{1/\alpha}, 1}.
    \]
    Then, for $0 < \alpha < 1$, when $m \geq \rbra{10W}^{\alpha/\rbra{1-\alpha}}$ for sufficiently large $d$, we have $mt\mu \leq 2W \varepsilon^{1/\alpha} m^{1-1/\alpha} \leq 1/5$; for $\alpha > 1$, using $m = s$, $\mu^{\alpha} \leq \mu_0^{\rbra{\alpha}}$, and $\mu_1^{(\alpha)}-e^{8\abs{1-\alpha}\varepsilon}\mu_0^{(\alpha)} \geq \rbra{R-e^{8\abs{1-\alpha}\varepsilon}}\mu_0^{(\alpha)}$, we have 
    \[
    \frac{mt\mu}{1-mt\mu} \leq \rbra*{ \frac{e^{8\abs{1-\alpha}\varepsilon} - 1}{R - e^{8\abs{1-\alpha}\varepsilon}} }^{1/\alpha} \leq \frac{1}{4}.
    \]
    
    For the second term of \cref{eq:prob-split}, noting that $\abs{\mathrm{F}_\alpha\rbra{p\rbra{U}} - f_b} \leq \theta_\alpha \varepsilon f_b$ with $\theta_\alpha = \min\cbra{\abs{1-\alpha}, 1}/4 \leq 1/4$ and $\varepsilon \leq 1/16$ implies $\abs{\mathrm{H}_\alpha\rbra{p\rbra{U}} - h_b} \leq \varepsilon$, we have
    \begin{align}
        \Pr\sbra*{ R_U \geq 1/2, \abs{\mathrm{H}_\alpha\rbra{p\rbra{U}} - h_b} > \varepsilon \mid b }
        & \leq \Pr\sbra*{ R_U \geq 1/2, \abs{\mathrm{F}_\alpha\rbra{p\rbra{U}} - f_b} > \theta_\alpha \varepsilon f_b \mid b } \nonumber \\
        & \leq \frac{ \E\sbra{\mathbf{1}_{\cbra{R_U \geq 1/2}} \abs{\mathrm{F}_\alpha\rbra{p\rbra{U}} - f_b}^2 | b} }{ \theta_\alpha^2 \varepsilon^2 f_b^2 }, \label{eq:prob-RU>1/2}
    \end{align}
    where we use Markov's inequality.
    Note that when $R_U \geq 1/2$,
    \begin{align*}
        \mathrm{F}_\alpha\rbra{p\rbra{U}} - f_b = t^{\alpha} \sum_{i=1}^m \rbra*{U_i^\alpha-\mu_b^{\rbra{\alpha}}} + s^{1-\alpha} \rbra*{R_U^\alpha-\rbra{\E\sbra{R_U}}^\alpha}.
    \end{align*}
    When $R_U \geq 1/2$, it holds that $R_U, \E\sbra{R_U} \in \sbra{1/2, 1}$, and thus Lagrange's mean value theorem gives 
    \[
    \abs*{R_U^\alpha - \rbra{\E\sbra{R_U}}^\alpha} \leq \sup_{x \in \sbra{1/2, 1}} \alpha x^{\alpha-1} \abs{R_U - \E\sbra{R_U}} = \alpha 2^{\max\cbra{1 - \alpha, 0}} \abs{R_U - \E\sbra{R_U}}.
    \]
    Using $R_U - \E\sbra{R_U} = -t \sum_{i=1}^m \rbra{U_i - \mu}$, we have
    \begin{align*}
        \abs*{\mathrm{F}_\alpha\rbra{p\rbra{U}} - f_b} \leq t^\alpha \abs*{ \sum_{i=1}^m \rbra*{U_i^\alpha-\mu_b^{\rbra{\alpha}}} } + \alpha 2^{\max\cbra{1 - \alpha, 0}} s^{1-\alpha} t \abs*{\sum_{i=1}^m \rbra{U_i - \mu}}.
    \end{align*}
    Using $\rbra{a + b}^2 \leq 2a^2 + 2b^2$ and $\mathbf{1}_{\cbra{R_U \geq 1/2}} \leq 1$, direct calculation gives 
    \begin{align}
         \E\sbra{\mathbf{1}_{\cbra{R_U \geq 1/2}} \abs{\mathrm{F}_\alpha\rbra{p\rbra{U}} - f_b}^2 | b} 
         & \leq 2t^{2\alpha} \E\sbra*{ \rbra*{\sum_{i=1}^m \rbra*{U_i^\alpha-\mu_b^{\rbra{\alpha}}}}^2 \mid b } \nonumber \\
         & \qquad + 2\alpha^2 4^{\max\cbra{1-\alpha, 0}} s^{2-2\alpha} t^2 \E\sbra*{ \rbra*{\sum_{i=1}^m \rbra{U_i - \mu}}^2 \mid b} \nonumber \\
         & \leq 2t^{2\alpha} m \mu_b^{\rbra{2\alpha}} + 2\alpha^2 4^{\max\cbra{1-\alpha, 0}} s^{2-2\alpha} t^2 m \mu_b^{\rbra{2}} \nonumber \\
         & \leq 4mt^{2\alpha} + 4\alpha^2 4^{\max\cbra{1-\alpha, 0}} s^{2-2\alpha} m t^2, \label{eq:expand}
    \end{align}
    where we use $\mu_b^{\rbra{2}}, \mu_b^{\rbra{2\alpha}} \leq 2$. 
    Combining \cref{eq:prob-split,eq:prob-RU<1/2,eq:prob-RU>1/2,eq:expand} with $f_b \geq \rbra{4/5}^\alpha s^{1-\alpha}$ gives
    \begin{align*}
        \Pr\sbra*{ \abs{\mathrm{H}_\alpha\rbra{p\rbra{U}} - h_b} > \varepsilon \mid b }
        & \leq 24mt^2 + \frac{4\rbra{5/4}^{2\alpha}}{\theta_\alpha^2} \rbra*{ \frac{mt^{2\alpha}}{s^{2-2\alpha} \varepsilon^2} + \frac{\alpha^2 4^{\max\cbra{1-\alpha, 0}} m t^2}{\varepsilon^2}} \\
        & \leq \frac{M}{m \varepsilon^{\max\cbra{2-2/\alpha, 0}}} \leq \frac{1}{100},
    \end{align*}
    where $M$ is a constant only depending on $\alpha$ given by
    \begin{align*}
    M 
    = 24W^2 + \frac{4\rbra{5/4}^{2\alpha}}{\theta_\alpha^2} \rbra*{ W^{2\alpha} + \alpha^2 4^{\max\cbra{1-\alpha, 0}} W^2 }. 
    \end{align*}
    To complete the proof, it remains to show that $100 M \leq m\varepsilon^{\max\cbra{2-2/\alpha, 0}}$ for sufficiently large $d$.
    For $0 < \alpha < 1$, this is obtained by noting that $m \geq 100M$ for sufficiently large $d$. 
    For $\alpha > 1$, if $\frac{d}{\varepsilon^{1/\alpha} \log\rbra{d}} \geq \frac{d^{1-1/\alpha}}{\varepsilon^2}$, i.e., $d\varepsilon^{2\alpha-1} \geq \rbra{\log\rbra{d}}^\alpha$, then $m \varepsilon^{2-2/\alpha} \geq d \varepsilon^{2\alpha-1}/3 \geq \rbra{\log\rbra{d}}^\alpha/3 \geq 100 M$ for $d \geq \exp\rbra{\rbra{300M}^{1/\alpha}}$; otherwise, the lower bound $\Omega\rbra{\frac{d}{\varepsilon^{1/\alpha} \log\rbra{d}}}$ obtained here is covered by the lower bound $\Omega\rbra{\frac{d^{1-1/\alpha}}{\varepsilon^2}}$ given in \cref{lemma:lower-bound-partial}.

    For Condition \ref{cond2}, define the distributions
    \[
    Q_b^{\rbra{n}} = \rbra*{\E_{U \sim \nu_b}\sbra*{\operatorname{Poi}\rbra{ntU}}}^{\otimes m} \otimes \operatorname{Poi}\rbra{n\rbra{1-mt\mu}/s}^{\otimes s} \otimes \operatorname{Poi}\rbra{0}^{\otimes \rbra{d-m-s}}.
    \]
    Triangle inequality gives 
    \begin{align} \label{eq:dTV-P01}
    d_{\textup{TV}}\rbra{P_0^{\rbra{n}}, P_1^{\rbra{n}}} \leq d_{\textup{TV}}\rbra{P_0^{\rbra{n}}, Q_0^{\rbra{n}}} + d_{\textup{TV}}\rbra{Q_0^{\rbra{n}}, Q_1^{\rbra{n}}} + d_{\textup{TV}}\rbra{Q_1^{\rbra{n}}, P_1^{\rbra{n}}}.
    \end{align}
    For the first and the third terms, we have
    \begin{align}
        d_{\textup{TV}}\rbra{P_b^{\rbra{n}}, Q_b^{\rbra{n}}} \leq \Pr\sbra{R_U < 1/2 | b} + \E\sbra*{\mathbf{1}_{\cbra{R_U \geq 1/2}} d_{\textup{TV}}\rbra{\operatorname{Poi}\rbra{nR_U/s}^{\otimes s}, \operatorname{Poi}\rbra{n\rbra{1-mt\mu}/s}^{\otimes s}} }. \label{eq:dTV-PQ}
    \end{align}
    When $R_U \geq 1/2$, 
    \begin{align}
    d_{\textup{TV}}\rbra{\operatorname{Poi}\rbra{nR_U/s}^{\otimes s}, \operatorname{Poi}\rbra{n\rbra{1-mt\mu}/s}^{\otimes s}}
    & \leq \sqrt{\frac 1 2 D_{\textup{KL}}\rbra{\operatorname{Poi}\rbra{nR_U/s}^{\otimes s}, \operatorname{Poi}\rbra{n\rbra{1-mt\mu}/s}^{\otimes s}}} \nonumber \\
    & = \sqrt{\frac 1 2 n \rbra*{R_U \log\rbra*{\frac{R_U}{1-mt\mu}} - R_U + \rbra{1-mt\mu}} } \nonumber \\
    & \leq \sqrt{\frac{n}{2\rbra{1-mt\mu}}} \abs*{R_U - \rbra{1-mt\mu}}, \label{eq:dTV-Poi}
    \end{align}
    where we use the Pinsker inequality $d_{\textup{TV}}\rbra{p, q} \leq \sqrt{D_{\textup{KL}}\rbra{p, q}/2}$ and the fact that $D_{\textup{KL}}\rbra{\operatorname{Poi}\rbra{a} \| \operatorname{Poi}\rbra{b}} = a \log\rbra{a/b} - a + b$. 
    Combining \cref{eq:dTV-PQ,eq:dTV-Poi,eq:prob-RU<1/2} gives 
    \begin{align}
        d_{\textup{TV}}\rbra{P_b^{\rbra{n}}, Q_b^{\rbra{n}}} 
        & \leq 24mt^2 + \sqrt{\frac{n}{2\rbra{1-mt\mu}} } \E\sbra*{ \mathbf{1}_{\cbra{R_U \geq 1/2}} \abs*{R_U - \rbra{1-mt\mu}} \mid b} \nonumber\\
        & \leq 24mt^2 + \sqrt{\frac{n}{2\rbra{1-mt\mu}}} \sqrt{mt^2 \mu_b^{\rbra{2}}} \nonumber\\
        & \leq 24mt^2 + 2 \sqrt{nmt^2}, \label{eq:dTV-PQ-up}
    \end{align}
    where we use the fact that $\mu_b^{\rbra{2}} \leq 2$ and $1-mt\mu \geq 4/5$. 
    For the second term, verifying that $2entK \leq 2ec_{\alpha, R} / 2^{16} < 1/8$ which gives $K > 2e\rbra{ntK^2}$, \cref{lemma:WY16-lemma3} gives 
    \begin{align} \label{eq:dTV-Q01}
    d_{\textup{TV}}\rbra{Q_0^{\rbra{n}}, Q_1^{\rbra{n}}} \leq m \rbra*{2entK}^K.
    \end{align}
    Combining \cref{eq:dTV-P01,eq:dTV-PQ-up,eq:dTV-Q01} gives that for any $0 \leq n \leq c_{\alpha,R}/\rbra{2^{16}tK}$,
    \begin{align*}
        d_{\textup{TV}}\rbra{P_0^{\rbra{n}}, P_1^{\rbra{n}}}
        & \leq 48mt^2 + 4 \sqrt{nmt^2} + m \rbra*{2entK}^K \\
        & \leq \frac{48}{2^{12}} + \frac{4}{\sqrt{5 \cdot 2^{16}}} + e^{-\rbra{4\log\rbra{8} - 1}} < 0.1 
    \end{align*}
    for sufficiently large $d$ such that $m \geq 2^{12} W^2$, where we use $mt^2 \leq W^2/m$, $nmt^2 \leq 1/\rbra{5\cdot 2^{16}K}$, and $2entK < 1/8$. 
\end{proof}

\section{Upper Bounds}

In this section, we present our estimator, with the analysis of its correctness and sample complexity. 

\subsection{The estimator}

For an unknown discrete distribution $p = \rbra{p_1, p_2, \dots, p_d}$, draw (expected) $2n$ samples in the Poissonized sampling model; i.e., let independent $X_i, Y_i \sim \operatorname{Poi}\rbra{np_i}$ for $1 \leq i \leq d$. 
Let 
\[
J = \ceil{4\alpha} + 3, \qquad K = \max\cbra*{\floor*{\frac{L}{32 \log \rbra{2}}}, 1}, \qquad L = \log\rbra{ed}.
\]
Define 
\begin{equation}
\widehat{H} = \frac{1}{1-\alpha} \log\rbra*{\operatorname{clip}_{I_\alpha}\rbra{\widehat{F}}}, \qquad
\widehat{F} = \sum_{i=1}^d \rbra*{ \mathbf{1}_{\cbra{Y_i \leq 10L}} \mathsf{L}_n\rbra{X_i} + \mathbf{1}_{\cbra{Y_i > 10L}} \mathsf{H}_n\rbra{X_i, Y_i} }, \label{eq:def-estimator}
\end{equation}
where 
\begin{align}
    I_\alpha & = \begin{cases}
        \sbra{1, d^{1-\alpha}}, & 0 < \alpha < 1, \\
        \sbra{d^{1-\alpha}, 1}, & \alpha > 1,
    \end{cases} \nonumber \\
    \mathsf{L}_n\rbra{x} & = \rbra*{\frac{40L}{n}}^{\alpha} \mathbf{1}_{\cbra{x \leq 80L}} \sum_{j=0}^K \frac{a_j}{\rbra{40L}^j} x^{\underline{j}}, \label{eq:def-Ln} \\
    \mathsf{H}_n\rbra{x, y} & = \frac{1}{n^\alpha} \sum_{j=0}^J \binom{\alpha}{j} \sum_{r=0}^j \binom{j}{r} \rbra{-1}^{j-r} y^{\alpha-r} x^{\underline{r}}, \label{eq:def-Hn}
\end{align}
where $a_j$'s are the polynomial coefficients of $Q_K\rbra{x} = \sum_{j=0}^K a_j x^j$ given in \cref{lemma:poly-approx-xalpha}, and we use the convention that $0^x = 0$ for any $x \neq 0$ and $0^0 = 1$. 

We have the following result for the estimator $\widehat{H}$. 

\begin{theorem} \label{thm:estimator}
    By taking 
    \[
    n = \Theta\rbra*{ \frac{d^{\max\cbra{1, 1/\alpha}}}{\varepsilon^{1/\alpha} L} + \mathbf{1}_{\cbra{\alpha>1/2}} \frac{d^{\abs{1-1/\alpha}}}{\varepsilon^2} },
    \]
    the estimator $\widehat{H}$ defined in \cref{eq:def-estimator} satisfies $\Pr\sbra{ \abs{\widehat{H} - \mathrm{H}_\alpha\rbra{p}} \leq \varepsilon } \geq 2/3$ for all distributions $p$ over an alphabet of size $d$. 
\end{theorem}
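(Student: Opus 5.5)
The plan is to reduce to estimating $\mathrm{F}_\alpha\rbra{p}$ to within relative error $c\abs{1-\alpha}\varepsilon$. By the definition of $\widehat{H}$ and the fact that $\mathrm{F}_\alpha\rbra{p} \in I_\alpha$ (clipping to $I_\alpha$ can only bring $\widehat F$ closer to it), it suffices to show
\[
\E\sbra*{\rbra{\widehat F - \mathrm{F}_\alpha\rbra{p}}^2} \leq c^2\varepsilon^2 \mathrm{F}_\alpha\rbra{p}^2/3
\]
and then apply Chebyshev. I will split this into a bias bound and a variance bound. Conditioning on $Y$, the summands are independent across $i$. For each $i$, the relevant quantity is the conditional bias $\E[\mathsf{L}_n(X_i)] - p_i^\alpha$ when $Y_i \le 10L$, and $\E[\mathsf{H}_n(X_i,Y_i)\mid Y_i]-p_i^\alpha$ when $Y_i>10L$.

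\textbf{Step 1 (small part).} I split into the cases $np_i \le 20L$ and $np_i > 20L$. For $np_i \le 20L$, I use $\E[X^{\underline j}]=(np_i)^j$ and absorb the truncation $\mathbf 1_{\{x\le 80L\}}$ into an exponentially small term via \cref{eq:Poi-upp}. This gives $\E\mathsf{L}_n(X_i) = (40L/n)^\alpha Q_K(np_i/40L) + \text{tiny}$. By \cref{lemma:poly-approx-xalpha}, the bias is then at most $c_\alpha(40L/n)^\alpha K^{-2\alpha} \lesssim (nL)^{-\alpha}$. The tiny terms are controlled using $\sum|a_j|\le c_\alpha 8^K$ with $8^K \le d^{1/10}$, which is the reason for the choice of $K$. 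For $np_i>20L$, the event $Y_i\le 10L$ has probability at most $d^{-\Omega(1)}$ by \cref{eq:Poi-low}, so this case contributes negligibly. Summing over $i$, the total bias from the small part is $\lesssim d/(nL)^\alpha$.

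\textbf{Step 2 (large part).} As computed in the introduction, $\E[\mathsf{H}_n(X,y)] = n^{-\alpha}\sum_{j\le J}\binom{\alpha}{j}y^{\alpha-j}(\lambda-y)^j$ with $\lambda = np_i$. The Taylor remainder is therefore $\lesssim n^{-\alpha}\,|\lambda-y|^{J+1}\xi^{\alpha-J-1}$. If $np_i \ge 5L$, then $Y_i \ge \lambda/2$ with high probability, so I bound the remainder using $\E|Y-\lambda|^{J+1}\lesssim \lambda^{(J+1)/2}$ from \cref{eq:EX-lambda-q}. This gives a bias of $\lesssim n^{-\alpha}\lambda^{\alpha-(J+1)/2}\le n^{-\alpha}L^{\alpha-(J+1)/2}\lesssim (nL)^{-\alpha}$, which holds because $J\ge 4\alpha+3$. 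The complementary events are handled with the moment bound \cref{eq:XY-Poi} together with Cauchy--Schwarz and the tail bounds; this is where the negative powers of $y$ need $y>10L\ge1$. If $np_i<5L$ but $Y_i>10L$, the probability is $d^{-\Omega(1)}$ and the contribution is negligible.

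The bias is then $\lesssim d/(nL)^\alpha$. Requiring this to be $\le c\varepsilon \mathrm{F}_\alpha(p)$, with $\mathrm{F}_\alpha\ge \min\{1,d^{1-\alpha}\}$, gives $n\gtrsim d^{\max\{1,1/\alpha\}}/(\varepsilon^{1/\alpha}L)$.

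\textbf{Step 3 (variance).} For the small part, each term is bounded by $(L/n)^\alpha\, 8^K(80L/40L)^{K}\lesssim d^{0.1}(L/n)^\alpha$ in absolute value. Its variance therefore sums to $\lesssim d^{1.2}(L/n)^{2\alpha}$, which is again dominated by the requirement above after adjusting constants; this is the step that forces $K$ to be a small multiple of $L$. For the large part, I linearize: $\mathsf{H}_n \approx n^{-\alpha}(\lambda^\alpha + \alpha\lambda^{\alpha-1}(X-\lambda) + \dots)$. This yields a variance of $\lesssim \sum_i n^{-2\alpha}\lambda^{2\alpha-1} = n^{-1}\mathrm{F}_{2\alpha-1}(p)$, plus higher-order terms of relative size $1/L$. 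By \cref{lemma:2alpha-1-bound}, for $\alpha>1/2$ the requirement is $n\gtrsim d^{|1-1/\alpha|}/\varepsilon^2$. For $\alpha\le1/2$, the sum $\sum\lambda^{2\alpha-1}$ over $\lambda\gtrsim L$ is at most $dL^{2\alpha-1}$, which is absorbed into the first term.

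\textbf{Main obstacle.} The hardest part is the bias analysis of $\mathsf{H}_n$ in Step 2. The estimator uses the random $Y_i$ both as the expansion point and inside $y^{\alpha-r}$, so uniform control of $y^{\alpha-r}$ on the event $Y_i>10L$ and of the Taylor remainder when $Y_i$ is far from $\lambda$ needs careful high-probability splitting.
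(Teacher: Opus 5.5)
Your proposal is correct and takes essentially the same approach as the paper. You bound the per-coordinate bias and second moment by splitting on $Y_i \le 10L$ versus $Y_i > 10L$, with the sub-cases $np_i \le 20L$ / $np_i > 20L$ and a Taylor-remainder-plus-Cauchy--Schwarz treatment of $\mathsf{H}_n$; then you sum, divide by $\mathrm{F}_\alpha(p)^2$ using $\mathrm{F}_\alpha(p) \ge \min\{1, d^{1-\alpha}\}$ and $\mathrm{F}_{2\alpha-1}(p)/\mathrm{F}_\alpha(p)^2 \le d^{|1-1/\alpha|}$, and finish with Chebyshev and the clipping/log step. One harmless slip: the sup bound on the small-part term is $c_\alpha 16^K (40L/n)^\alpha$ with $16^K \le (ed)^{1/8}$, not $d^{0.1}$, so that variance is of order $d^{5/4}L^{2\alpha}/n^{2\alpha}$ (as in the paper), which is still absorbed by the first term of $n$.
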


\subsection{Correctness and complexity analysis}

\begin{lemma} \label{lemma:Poi-single-error}
    For independent $X, Y \sim \operatorname{Poi}\rbra{np}$, let 
    \[
    Z = \mathbf{1}_{\cbra{Y \leq 10L}} \mathsf{L}_n\rbra{X} + \mathbf{1}_{\cbra{Y > 10L}} \mathsf{H}_n\rbra{X, Y},
    \]
    where $\mathsf{L}_n$ and $\mathsf{H}_n$ are defined in \cref{eq:def-Ln,eq:def-Hn}, respectively. 
    Then, 
    \begin{align}
        \abs*{ \E\sbra{Z} - p^{\alpha} } & \leq C_\alpha \rbra{nL}^{-\alpha}, \label{eq:cond1-abs-bias} \\
        \Var\sbra{Z} & \leq C_\alpha^2 \rbra*{ \frac{L^{2\alpha} d^{1/4}}{n^{2\alpha}} + \mathbf{1}_{\cbra{\alpha > 1/2}} \frac{p^{2\alpha - 1}}{n} }, \label{eq:cond2-var}
    \end{align}
    where $C_\alpha = c_\alpha \rbra{256 J}^{32 J}$ and $c_\alpha$ is the constant in \cref{lemma:poly-approx-xalpha}.
\end{lemma}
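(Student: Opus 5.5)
We will use the independence of $X$ and $Y$ and condition on $Y$. Write $\lambda = np$. Then
\[
\E\sbra{Z} = \Pr\sbra{Y \leq 10L}\,\E\sbra{\mathsf{L}_n\rbra{X}} + \E\sbra{\mathbf{1}_{\cbra{Y > 10L}} \E\sbra{\mathsf{H}_n\rbra{X, Y} \mid Y}}.
\]
We split into two regimes, $\lambda \leq 20L$ (small) and $\lambda > 20L$ (large). In each regime one branch is the main term and the other is suppressed by a Poisson tail bound from \cref{lemma:poi-neq}:
\begin{itemize}
    \item if $\lambda > 20L$, then $\Pr\sbra{Y \leq 10L} \leq e^{-\lambda/8} \leq (ed)^{-2}$-ish;
    \item if $\lambda \leq 5L$, then $\Pr\sbra{Y > 10L}$ is at most polynomially small in $d$.
\end{itemize}
The intermediate range $5L < \lambda \leq 20L$ requires both branches to be accurate, and the design allows this. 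We will check that every tail probability is at most $(ed)^{-c}$ with $c$ large enough (this is why the thresholds are $10L$, $40L$, $80L$) that, multiplied by the crude bounds $|\mathsf{L}_n|, |\mathsf{H}_n| \lesssim (L/n)^\alpha 8^K$ or polynomial in $\lambda$, the error stays below $(nL)^{-\alpha}$. Note that $8^K \leq d^{1/32\cdot 3}$ since $K = L/(32\log 2)$, so $8^{2K} \leq d^{3/16} \le d^{1/4}$. This is the source of the $d^{1/4}$ factor in the variance.

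\textbf{Small branch.} Using $\E\sbra{X^{\underline j}} = \lambda^j$, if the indicator $\mathbf{1}_{\cbra{x\leq 80L}}$ were absent we would get
\[
\E\sbra{\mathsf{L}_n\rbra{X}} = (40L/n)^\alpha\, Q_K\rbra{\lambda/40L}.
\]
For $\lambda \leq 40L$, \cref{lemma:poly-approx-xalpha} then gives
\[
\abs*{\E\sbra{\mathsf{L}_n\rbra{X}} - p^\alpha} \leq (40L/n)^\alpha c_\alpha K^{-2\alpha} \lesssim (nL)^{-\alpha},
\]
since $K \asymp L$. The truncation at $80L$ costs
\[
(40L/n)^{\alpha}\sum_j |a_j| (40L)^{-j}\,\E\sbra{X^{\underline j}\mathbf{1}_{\cbra{X>80L}}},
\]
which we control by Cauchy--Schwarz, using \cref{eq:EXq-Poi} together with the Chernoff bound \cref{eq:Poi-upp} at $u = 1$. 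For the variance, the bound $|\mathsf{L}_n| \leq (40L/n)^\alpha \sum_j |a_j| 2^j \leq (40L/n)^\alpha c_\alpha 16^K$ on $\cbra{x \leq 80L}$ gives $\Var \lesssim L^{2\alpha} d^{1/4} n^{-2\alpha}$, after checking that $16^{2K} \leq d^{1/4}$. If this check fails, we will sharpen it via $2^{j}(a_j)$ bookkeeping or by bounding $\E\sbra{(X^{\underline j})^2}$ directly.

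\textbf{Large branch.} Conditionally on $Y = y$,
\[
\E\sbra{\mathsf{H}_n\rbra{X,y}} = n^{-\alpha}\sum_{j\le J}\binom{\alpha}{j} y^{\alpha-j}(\lambda-y)^j,
\]
which is the degree-$J$ Taylor polynomial of $\lambda^\alpha$ about $y$, rescaled by $n^{-\alpha}$. Taylor's remainder gives
\[
\abs*{\E\sbra{\mathsf{H}_n \mid Y} - p^\alpha} \leq n^{-\alpha}\binom{\alpha}{J+1}\,|\lambda - y|^{J+1}\xi^{\alpha-J-1},
\]
where $\xi$ lies between $y$ and $\lambda$. On the event $y \geq \lambda/2$ we have $\xi \gtrsim \lambda$, and \cref{eq:EX-lambda-q} gives $\E\sbra{|Y-\lambda|^{J+1}} \lesssim \lambda^{(J+1)/2}$. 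The bias is then of order $n^{-\alpha}\lambda^{\alpha-(J+1)/2}$. Because $J+1 \geq 4\alpha+4$, this is at most $n^{-\alpha}\lambda^{-\alpha} \le (nL)^{-\alpha}$ once $\lambda \gtrsim L$. The event $y < \lambda/2$ with $y > 10L$ has probability $e^{-\lambda/8}$ by \cref{eq:Poi-low}, and on it $|\mathsf{H}_n|$ is polynomially bounded by \cref{eq:XY-Poi}.

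For the variance of the large branch, we write $\mathsf{H}_n$ as $n^{-\alpha}$ times a polynomial in $X$ with coefficients involving $y^{\alpha-r}$. The leading fluctuation is the $j = 1$ term $\alpha y^{\alpha-1}(X - y)$, with variance $\approx \alpha^2\lambda^{2\alpha-1}n^{-2\alpha} = \alpha^2 p^{2\alpha-1}/n$, while the fluctuation of $y^\alpha$ contributes the same order. The higher terms are smaller by powers of $\lambda^{-1/2}$. This yields $p^{2\alpha-1}/n$ when $\alpha > 1/2$. When $\alpha \le 1/2$, we have $\lambda^{2\alpha-1} \leq L^{2\alpha-1} \leq L^{2\alpha}$ for $\lambda > L$, which is absorbed into the first term (multiplying by $n^{-2\alpha}$). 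The moments needed here all come from \cref{eq:EXq-Poi,eq:XY-Poi}, and the constants $(256J)^{32J}$ absorb the $q^q$ factors with $q \leq 2J+2$.

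\textbf{Main obstacle.} The hardest step is the careful bookkeeping in the intermediate regime and in the tails. There we must ensure that the tail probabilities, such as $(ed)^{-\Omega(1)}$ weighted by the crude magnitude bounds $8^K$, $\lambda^{J}$, and $y^{\alpha-r}$, are genuinely $\lesssim (nL)^{-\alpha}$. This must hold even though $n$ may be much larger than $d$, which requires using $\lambda$-dependent tails rather than $d$-dependent ones.
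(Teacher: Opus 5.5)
Your overall plan is essentially the paper's. You condition on $Y$; on the small branch you use polynomial approximation plus the $16^K$ sup bound (which gives the $d^{1/4}$); on the large branch you use the Taylor remainder with \cref{eq:EX-lambda-q}; and $\lambda$-dependent Poisson tails suppress the wrong branch. Your three $\lambda$-regimes versus the paper's split on $Y$ is cosmetic.

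\textbf{The step that fails as written is the truncation cost.} The quantity is $(40L)^{\alpha}\sum_{j\le K}\abs{a_j}(40L)^{-j}\E\sbra{X^{\underline j}\mathbf{1}_{\cbra{X>80L}}}$. Cauchy--Schwarz with \cref{eq:EXq-Poi} gives $\sqrt{\E\sbra{X^{2j}}}\le(2j)^j\max\cbra{\sqrt\lambda,\lambda^j}$. So for $\lambda$ near $20L$, your bound on the $j$-th summand is $(40L)^{\alpha}\abs{a_j}\,j^j\sqrt{\Pr\sbra{X>80L}}$. Here $j$ runs up to $K\asymp L$, and the only control on the coefficients is $\sum_j\abs{a_j}\le c_\alpha 8^K$. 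This leaves $(40L)^\alpha c_\alpha 8^K K^K\sqrt{\Pr\sbra{X>80L}}$, where $K^K=e^{\Theta(L\log L)}$. But $\Pr\sbra{X>80L}$ at $\lambda\asymp L$ is genuinely only $e^{-\Theta(L)}$, so no tail bound can compensate. The estimate is therefore not $O(L^{-\alpha})$ uniformly in $d$, and cannot be absorbed into the $d$-independent constant $C_\alpha$. The $q^q$ loss in \cref{eq:EXq-Poi} is harmless only when $q=O_\alpha(1)$, as on the large branch. Separately, \cref{eq:Poi-upp} at $u=1$ controls $\cbra{X\ge 2\lambda}$, and $(e/4)^\lambda$ does not decay in $L$ when $\lambda=O(1)$. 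The tail must be taken at the fixed threshold $80L$, e.g.\ via monotonicity of the Poisson tail in $\lambda$.

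\textbf{How to repair it.} The paper avoids moment bounds here by exponential tilting with the factorial generating function. For $\lambda\le 20L$ and $j\le K$,
$\E\sbra{\mathbf{1}_{\cbra{X>80L}}X^{\underline j}}\le 2^{-80L}\E\sbra{2^X X^{\underline j}}=\lambda^j e^{\lambda-(80L-j)\log 2}\le \lambda^j e^{-20L}$.
After this, $(\lambda/40L)^j\le 1$, and $\sum_j\abs{a_j}\le c_\alpha 8^K\le 8c_\alpha e^{3L/32}$ is swamped by $e^{-20L}$. Alternatively, your Cauchy--Schwarz route survives if you use the exact identity $\E\sbra{(X^{\underline j})^2}=\sum_i\binom{j}{i}^2 i!\,\lambda^{2j-i}$. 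For $\lambda\le 20L$ this gives $\sqrt{\E\sbra{(X^{\underline j})^2}}\le(40L)^j$.

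With this repaired, the rest goes through. Assemble the variance as $\Var\sbra{Z}\le\E\sbra{(Z-p^\alpha)^2}$ over the disjoint events $\cbra{Y\le 10L}$, $\cbra{Y>10L,\abs{Y-\lambda}\le\lambda/2}$ and $\cbra{Y>10L,\abs{Y-\lambda}>\lambda/2}$. The $p^{2\alpha}\Pr\sbra{Y\le 10L}$ piece needs the $e^{-\lambda/8}$ tail. Your ``leading fluctuation'' heuristic for $\mathsf{H}_n$ can be made exact, as in the paper, via the conditional variance $\sum_{r\ge1}\frac{\lambda^r}{r!}\rbra{\cdots}^2$, with each bracket $\lesssim\lambda^{\alpha-r}$.
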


\begin{proof}
    Write $\lambda = np$. 

    \textbf{Case 1: $Y \leq 10L$.} 
    We consider the term $\mathbf{1}_{\cbra{Y \leq 10L}} \mathsf{L}_n\rbra{X}$. 

    \textbf{Case 1.1: $\lambda \leq 20L$.}
    Let $T\rbra{x} = \rbra{40L}^\alpha \sum_{j=0}^K a_j x^{\underline{j}}/\rbra{40L}^{j}$. 
    Then,
    \[
    \E\sbra{T\rbra{X}} = \rbra{40L}^{\alpha} Q_K\rbra{\lambda/40L}, \qquad
    n^\alpha \mathsf{L}_n\rbra{X} = T\rbra{X} - \mathbf{1}_{\cbra{X > 80L}}T\rbra{X}.
    \]
    The triangle inequality gives 
    \begin{equation} \label{eq:case1.1-abs-bias}
    \abs*{ \E\sbra{n^\alpha \mathsf{L}_n\rbra{X}} - \lambda^\alpha } \leq \abs*{ \E\sbra{T\rbra{X}} - \lambda^\alpha } + \abs*{ \E\sbra{\mathbf{1}_{\cbra{X > 80L}}T\rbra{X}} }.
    \end{equation}
    For the first term, \cref{lemma:poly-approx-xalpha} gives
    \begin{equation} \label{eq:case1.1-abs-diff-TX}
    \abs*{ \E\sbra{T\rbra{X}} - \lambda^\alpha } = \rbra{40L}^{\alpha} \abs*{ Q_K\rbra*{\frac{\lambda}{40L}} - \rbra*{\frac{\lambda}{40L}}^\alpha } \leq c_\alpha \rbra{40L}^{\alpha} K^{-2\alpha}.
    \end{equation}
    For the second term, 
    \begin{align}
        \abs*{ \E\sbra{\mathbf{1}_{\cbra{X > 80L}}T\rbra{X}} }
        & \leq \rbra{40L}^\alpha \sum_{j=0}^K \frac{\abs{a_j}}{\rbra{40L}^j} \E\sbra{\mathbf{1}_{\cbra{X > 80L}} X^{\underline{j}}} \nonumber \\
        & \leq \rbra{40L}^\alpha \sum_{j=0}^K \frac{\abs{a_j}}{\rbra{40L}^j} \lambda^j e^{-20L} \nonumber \\
        & \leq \rbra{40L}^\alpha e^{-20L} \sum_{j=0}^K \abs{a_j} \nonumber \\
        & \leq c_\alpha 8^K \rbra{40L}^\alpha e^{-20L}, \label{eq:case1.1-abs-TX-large}
    \end{align}
    where the second inequality uses $\E\sbra{\mathbf{1}_{\cbra{X > 80L}} X^{\underline{j}}}
    \leq 2^{-80L} \E\sbra{2^X X^{\underline{j}}}
    = \lambda^j e^{\lambda-\rbra{80L-j}\log\rbra{2}}
    \leq \lambda^j e^{-20L}$, the third inequality uses $\lambda \leq 20L$, and the fourth inequality is by \cref{lemma:poly-approx-xalpha}.
    Combining \cref{eq:case1.1-abs-bias,eq:case1.1-abs-diff-TX,eq:case1.1-abs-TX-large} gives 
    \begin{align*}
        \abs*{ \E\sbra{n^\alpha \mathsf{L}_n\rbra{X}} - \lambda^\alpha }
        & \leq c_\alpha \rbra{40L}^{\alpha} \rbra*{K^{-2\alpha} + 8^K e^{-20L}} \\
        & \leq c_\alpha 40^\alpha \rbra*{\rbra{64\log\rbra{2}}^{2\alpha} + 8\rbra*{\frac{2\alpha}{19e}}^{2\alpha}} L^{-\alpha}
        \leq \frac{C_\alpha}{4} L^{-\alpha},
    \end{align*}
    where we use $K \geq L/\rbra{64\log\rbra{2}}$, $8^K \leq 8e^{3L/32}$, and $L^{2\alpha}e^{-19L} \leq \rbra{2\alpha/\rbra{19e}}^{2\alpha}$.

    \textbf{Case 1.2: $\lambda > 20L$.}
    Since $X$ and $Y$ are independent, 
    \begin{align*}
        \abs*{ \E\sbra{ \mathbf{1}_{\cbra{Y \leq 10L}} \rbra{n^\alpha \mathsf{L}_n\rbra{X} - \lambda^\alpha} } }
        & = \Pr\sbra{Y \leq 10L} \abs*{ \E\sbra{n^\alpha \mathsf{L}_n\rbra{X} - \lambda^\alpha} } \\
        & \leq e^{-\lambda/8} \rbra*{ 16 c_\alpha \rbra{40L}^\alpha e^{L/8} + \lambda^\alpha } \\
        & \leq 16c_\alpha 40^\alpha L^\alpha e^{-19L/8} + \lambda^\alpha e^{-\lambda/8} \\
        & \leq \rbra*{16c_\alpha 40^\alpha \rbra*{\frac{16\alpha}{19e}}^{2\alpha}
            + 20^{-\alpha}\rbra*{\frac{16\alpha}{e}}^{2\alpha}} L^{-\alpha}
        \leq \frac{C_\alpha}{4} L^{-\alpha},
    \end{align*}
    where we use $\Pr\sbra{Y \leq 10L} \leq \Pr\sbra{Y \leq \lambda/2} \leq e^{-\lambda/8}$ by \cref{eq:Poi-low} in \cref{lemma:poi-neq} and $\abs{n^{\alpha} \mathsf{L}_n\rbra{X}}
    \leq \rbra{40L}^\alpha \sum_{j=0}^K \abs{a_j} 2^j
    \leq c_\alpha \rbra{40L}^{\alpha} 16^K
    \leq 16c_\alpha \rbra{40L}^{\alpha} e^{L/8}$
    by \cref{lemma:poly-approx-xalpha}.
    
    Combining Cases 1.1 and 1.2 and using independence gives
    \begin{equation} \label{eq:case1-abs-bias}
    \abs*{ \E\sbra{ \mathbf{1}_{\cbra{Y \leq 10L}} \rbra{n^\alpha \mathsf{L}_n\rbra{X} - \lambda^\alpha} } }
    \leq \frac{C_\alpha}{4} L^{-\alpha}.
    \end{equation}
    For the second moment, using $\rbra{a-b}^2 \leq 2a^2+2b^2$ and the same bound on $\abs{n^\alpha\mathsf{L}_n\rbra{X}}$, we have
    \begin{align}
        \E\sbra{ \mathbf{1}_{\cbra{Y \leq 10L}} \rbra{n^\alpha \mathsf{L}_n\rbra{X} - \lambda^\alpha}^2 } 
        & \leq \begin{cases}
            2c_\alpha^2\rbra{40L}^{2\alpha}256^K + 2\rbra{20L}^{2\alpha}, & \lambda \leq 20L, \\
            \rbra*{2c_\alpha^2\rbra{40L}^{2\alpha}256^K + 2\lambda^{2\alpha}}e^{-\lambda/8}, & \lambda > 20L,
        \end{cases} \nonumber \\
        & \leq \frac{C_\alpha^2}{4} L^{2\alpha}d^{1/4}, \label{eq:case1-second-moment}
    \end{align}
    where we use $256^K \leq 256 e^{L/4} \leq 512 d^{1/4}$.

    \textbf{Case 2: $Y > 10L$.}
    We consider the term $\mathbf{1}_{\cbra{Y > 10L}} \mathsf{H}_n\rbra{X,Y}$.
    When $\lambda=0$, this case has probability zero, so we assume $\lambda>0$ below.

    \textbf{Case 2.1: $\abs{Y-\lambda} \leq \lambda/2$.}
    In this case, $\lambda > 20L/3 > 1$.
    Since $X$ and $Y$ are independent and $\E\sbra{X^{\underline r}}=\lambda^r$, direct calculation gives
    \begin{align}
        \E\sbra*{n^\alpha\mathsf{H}_n\rbra{X,Y} \mid Y}
        & = \sum_{j=0}^J \binom{\alpha}{j} \sum_{r=0}^j \binom{j}{r} \rbra{-1}^{j-r}Y^{\alpha-r}\lambda^r \nonumber \\
        & = \sum_{j=0}^J \binom{\alpha}{j} Y^{\alpha-j}\rbra{\lambda-Y}^j. \label{eq:case2.1-mean}
    \end{align}
    Taylor's theorem applied to $x^\alpha$ at $Y$ gives
    \begin{align}
        \abs*{ \E\sbra*{n^\alpha\mathsf{H}_n\rbra{X,Y} \mid Y} - \lambda^\alpha }
        & \leq \abs*{\binom{\alpha}{J+1}}\rbra{\lambda/2}^{\alpha-J-1}\abs{Y-\lambda}^{J+1} \nonumber \\
        & \leq \rbra{16J}^{2J}\lambda^{\alpha-J-1}\abs{Y-\lambda}^{J+1}, \label{eq:case2.1-Taylor}
    \end{align}
    where we use $\alpha<J/4$ and $\abs{\binom{\alpha}{j}} \leq J^j/j!$ for $0 \leq j \leq J+1$.
    Taking expectations gives
    \begin{align}
        \abs*{ \E\sbra{ \mathbf{1}_{\cbra{Y > 10L,\ \abs{Y-\lambda} \leq \lambda/2}} \rbra{n^\alpha\mathsf{H}_n\rbra{X,Y}-\lambda^\alpha} } } \nonumber & \leq \rbra{16J}^{2J}\lambda^{\alpha-J-1}\E\sbra{\abs{Y-\lambda}^{J+1}} \nonumber \\
        & \leq \rbra{16J}^{2J}\rbra{2\rbra{J+1}}^{J+1}\lambda^{\alpha-\rbra{J+1}/2} \nonumber \\
        &
        \leq \rbra{16J}^{4J}L^{-\alpha}
        \leq \frac{C_\alpha}{4}L^{-\alpha}, \label{eq:case2.1-abs-bias}
    \end{align}
    where we use $J+1 \geq 4\alpha+4$, $\lambda>L$, and \cref{eq:EX-lambda-q} in \cref{lemma:poi-neq}.

    For the second moment, expanding the square in \cref{eq:def-Hn} gives
    \begin{align}
        \Var\sbra*{n^\alpha\mathsf{H}_n\rbra{X,Y} \mid Y}
        & = \sum_{r=1}^J \frac{\lambda^r}{r!}
            \rbra*{\sum_{j=r}^J \binom{\alpha}{j}j^{\underline r}Y^{\alpha-j}\rbra{\lambda-Y}^{j-r}}^2 \nonumber \\
        & \leq \rbra{16J}^{4J}\sum_{r=1}^J\frac{\lambda^{2\alpha-r}}{r!}
        \leq e\rbra{16J}^{4J}\lambda^{2\alpha-1}. \label{eq:case2.1-var}
    \end{align}
    Here, the equality follows from the Poisson factorial moment identity
    \[
    \E\sbra{X^{\underline j}X^{\underline r}}
    = \sum_{i=0}^{\min\cbra{j,r}}\binom{j}{i}\binom{r}{i}i!\lambda^{j+r-i},
    \]
    and the first inequality uses, for $1 \leq r \leq J$,
    \begin{align*}
        \abs*{\sum_{j=r}^J \binom{\alpha}{j}j^{\underline r}Y^{\alpha-j}\rbra{\lambda-Y}^{j-r}}
        \leq \lambda^{\alpha-r}\sum_{j=r}^J\frac{J^j}{j!}J^r 2^{\abs{\alpha-j}}2^{r-j} 
        \leq 2^{2J}J^Je^{J/2}\lambda^{\alpha-r}
        \leq \rbra{16J}^{2J}\lambda^{\alpha-r}.
    \end{align*}
    Combining \cref{eq:case2.1-Taylor,eq:case2.1-var} and using \cref{eq:EX-lambda-q}, we have
    \begin{align}
        \E\sbra{ \mathbf{1}_{\cbra{Y > 10L,\ \abs{Y-\lambda} \leq \lambda/2}} \rbra{n^\alpha\mathsf{H}_n\rbra{X,Y}-\lambda^\alpha}^2 } \nonumber & \leq \rbra{16J}^{4J}\rbra*{e\lambda^{2\alpha-1}
            + \lambda^{2\alpha-2J-2}\E\sbra{\abs{Y-\lambda}^{2J+2}}} \nonumber \\
        & \leq \rbra{16J}^{4J}\rbra*{e\lambda^{2\alpha-1}
            + \rbra{4\rbra{J+1}}^{2J+2}\lambda^{2\alpha-J-1}} \nonumber \\
        &
        \leq \rbra{16J}^{10J}\lambda^{2\alpha-1}, \label{eq:case2.1-second-moment}
    \end{align}
    where we use \cref{eq:EX-lambda-q} in \cref{lemma:poi-neq}.
    Here, note that for $0 < \alpha \leq 1/2$, $\lambda^{2\alpha-1} \leq 1$; for $\alpha>1/2$, $n^{-2\alpha}\lambda^{2\alpha-1}=p^{2\alpha-1}/n$.

    \textbf{Case 2.2: $\abs{Y-\lambda} > \lambda/2$.}
    The Poisson tail bounds give
    \begin{align}
        \Pr\sbra{Y > 10L,\ \abs{Y-\lambda} > \lambda/2}
        & \leq \begin{cases}
            e^{\lambda-10L\log\rbra{2}}, & 0<\lambda<5L, \\
            2e^{-\lambda/12}, & \lambda \geq 5L,
        \end{cases} \nonumber \\
        & \leq 2e^{-\rbra{\lambda+L}/16}, \label{eq:case2.2-prob}
    \end{align}
    where the first case follows from $\Pr\sbra{Y>10L} \leq 2^{-10L}\E\sbra{2^Y}$, and the second case follows from \cref{eq:Poi-upp,eq:Poi-low} in \cref{lemma:poi-neq} with $u=v=1/2$.
    By \cref{eq:def-Hn},
    \begin{align*}
        \abs{n^\alpha\mathsf{H}_n\rbra{X,Y}}
        \leq \sum_{j=0}^J\abs*{\binom{\alpha}{j}}\sum_{r=0}^j\binom{j}{r}Y^{\alpha-r}X^{\underline r} 
        \leq \rbra{1+X+Y}^{2J}\sum_{j=0}^J\abs*{\binom{\alpha}{j}}2^j
        \leq 2^{6J}\rbra{1+X+Y}^{2J},
    \end{align*}
    where we use $Y>10L>1$, $\alpha<J$, $X^{\underline r} \leq X^r$, and
    $\sum_{j=0}^J\abs{\binom{\alpha}{j}}2^j \leq \sum_{j=0}^J\rbra{2J}^j/j! \leq e^{2J} \leq 2^{6J}$.
    Therefore,
    \[
    \abs{n^\alpha\mathsf{H}_n\rbra{X,Y}-\lambda^\alpha}
    \leq 2^{6J+1}\rbra{1+X+Y+\lambda}^{2J}.
    \]
    By the Cauchy--Schwarz inequality,  \cref{eq:XY-Poi} in \cref{lemma:poi-neq}, and \cref{eq:case2.2-prob}, we have
    \begin{align}
        \abs*{ \E\sbra{ \mathbf{1}_{\cbra{Y > 10L,\ \abs{Y-\lambda} > \lambda/2}} \rbra{n^\alpha\mathsf{H}_n\rbra{X,Y}-\lambda^\alpha} } } \nonumber & \leq 2^{6J+1}\sqrt{\E\sbra{\rbra{1+X+Y+\lambda}^{4J}}}
            \sqrt{2e^{-\rbra{\lambda+L}/16}} \nonumber \\
        & \leq 2^{6J+2}\rbra{16J}^{2J}\rbra{1+\lambda}^{2J}e^{-\rbra{\lambda+L}/32}
        \leq \frac{C_\alpha}{4}L^{-\alpha}, \label{eq:case2.2-abs-bias}
    \end{align}
    and similarly,
    \begin{align}
        \E\sbra{ \mathbf{1}_{\cbra{Y > 10L,\ \abs{Y-\lambda} > \lambda/2}} \rbra{n^\alpha\mathsf{H}_n\rbra{X,Y}-\lambda^\alpha}^2 } \nonumber & \leq 2^{12J+2}\sqrt{\E\sbra{\rbra{1+X+Y+\lambda}^{8J}}}
            \sqrt{2e^{-\rbra{\lambda+L}/16}} \nonumber \\
        & \leq 2^{12J+3}\rbra{32J}^{4J}\rbra{1+\lambda}^{4J}e^{-\rbra{\lambda+L}/32}
        \leq \frac{C_\alpha^2}{4}. \label{eq:case2.2-second-moment}
    \end{align}

    \textbf{Putting it all together.} Combining \cref{eq:case1-abs-bias,eq:case2.1-abs-bias,eq:case2.2-abs-bias} and using $\lambda=np$ gives
    \[
    \abs*{\E\sbra{Z}-p^\alpha}
    \leq \frac{3C_\alpha}{4}\rbra{nL}^{-\alpha}
    \leq C_\alpha\rbra{nL}^{-\alpha},
    \]
    which proves \cref{eq:cond1-abs-bias}.
    On the other hand, combining \cref{eq:case1-second-moment,eq:case2.1-second-moment,eq:case2.2-second-moment} gives
    \begin{align*}
        \Var\sbra{Z}
        & \leq \E\sbra{\rbra{Z-p^\alpha}^2} \\
        & \leq \frac{3C_\alpha^2}{4}\frac{L^{2\alpha}d^{1/4}}{n^{2\alpha}}
            + \frac{C_\alpha^2}{4}\mathbf{1}_{\cbra{\alpha>1/2}}\frac{p^{2\alpha-1}}{n} \\
        & \leq C_\alpha^2\rbra*{\frac{L^{2\alpha}d^{1/4}}{n^{2\alpha}}
            + \mathbf{1}_{\cbra{\alpha>1/2}}\frac{p^{2\alpha-1}}{n}},
    \end{align*}
    which proves \cref{eq:cond2-var}.
\end{proof}

\begin{lemma}
    Let $\widehat{F}$ be the estimator defined in \cref{eq:def-estimator}.
    Then,
    \begin{align} \label{eq:relative-error}
        \E\sbra*{ \rbra*{ \frac{ \widehat{F} - \mathrm{F}_\alpha\rbra{p} }{ \mathrm{F}_\alpha\rbra{p} } }^2 }
        \leq C_\alpha^4 \rbra*{ \frac{d^{\max\cbra{2, 2\alpha}}}{\rbra{nL}^{2\alpha}} + \mathbf{1}_{\cbra{\alpha>1/2}} \frac{d^{\abs{1-1/\alpha}}}{n} },
    \end{align}
    where $C_\alpha$ is the constant defined in \cref{lemma:Poi-single-error}.
\end{lemma}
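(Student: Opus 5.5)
The plan is the usual bias--variance decomposition, using \cref{lemma:Poi-single-error} for each coordinate and independence across coordinates, then normalizing by lower bounds on $\mathrm{F}_\alpha\rbra{p}$.

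Write $Z_i = \mathbf{1}_{\cbra{Y_i \leq 10L}} \mathsf{L}_n\rbra{X_i} + \mathbf{1}_{\cbra{Y_i > 10L}} \mathsf{H}_n\rbra{X_i, Y_i}$, so $\widehat{F} = \sum_{i=1}^d Z_i$. In the Poissonized model the pairs $\rbra{X_i, Y_i}$ are independent across $i$, so the $Z_i$ are independent. Hence
\[
\E\sbra*{\rbra{\widehat{F} - \mathrm{F}_\alpha\rbra{p}}^2} = \rbra*{\sum_{i=1}^d \rbra{\E\sbra{Z_i} - p_i^\alpha}}^2 + \sum_{i=1}^d \Var\sbra{Z_i}.
\]
By \cref{eq:cond1-abs-bias}, the squared bias is at most $C_\alpha^2 d^2 \rbra{nL}^{-2\alpha}$. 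By \cref{eq:cond2-var}, the variance sum is at most
\[
C_\alpha^2\rbra*{ \frac{L^{2\alpha} d^{5/4}}{n^{2\alpha}} + \mathbf{1}_{\cbra{\alpha>1/2}} \frac{\mathrm{F}_{2\alpha-1}\rbra{p}}{n} }.
\]

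Next, divide by $\mathrm{F}_\alpha\rbra{p}^2$, using the power-mean bounds $\mathrm{F}_\alpha\rbra{p} \geq 1$ for $\alpha<1$ and $\mathrm{F}_\alpha\rbra{p} \geq d^{1-\alpha}$ for $\alpha>1$.
\begin{itemize}
\item The bias term becomes at most $C_\alpha^2 d^2 \rbra{nL}^{-2\alpha}$ for $\alpha<1$, and at most $C_\alpha^2 d^{2\alpha}\rbra{nL}^{-2\alpha}$ for $\alpha>1$. Both are $C_\alpha^2 d^{\max\cbra{2,2\alpha}}\rbra{nL}^{-2\alpha}$.
\item The $\mathrm{F}_{2\alpha-1}$ term is handled directly by \cref{lemma:2alpha-1-bound}. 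It gives $\mathrm{F}_{2\alpha-1}\rbra{p}/\mathrm{F}_\alpha\rbra{p}^2 \leq d^{\abs{1-1/\alpha}}$, which produces the second summand.
\end{itemize}

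The remaining piece, and the main obstacle, is the $L^{2\alpha}d^{5/4}n^{-2\alpha}$ variance term. It must be absorbed into $d^{\max\cbra{2,2\alpha}}\rbra{nL}^{-2\alpha}$, i.e., we need $L^{4\alpha} d^{5/4} \lesssim d^{2}$ for $\alpha<1$ and $L^{4\alpha} d^{5/4} \lesssim d^{2\alpha+ 2\alpha - 2}\cdot d^{2-2\alpha}$ (equivalently $\lesssim d^{2}$) for $\alpha>1$, after the normalization.
\begin{itemize}
\item For $\alpha<1$, normalization leaves $L^{2\alpha}d^{5/4}n^{-2\alpha}$, and we need $L^{4\alpha}\leq c\,d^{3/4}$.
\item For $\alpha>1$, normalization leaves $L^{2\alpha}d^{5/4-2+2\alpha}n^{-2\alpha}$, and we need $L^{4\alpha}\leq c\,d^{3/4}$ again.
\end{itemize}
Since $L=\log\rbra{ed}$, the inequality $L^{4\alpha} \leq C' d^{3/4}$ holds for all $d$ with a constant $C'$ depending only on $\alpha$.

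I would verify that this constant, together with the factors $C_\alpha^2$ collected above (at most three terms), fits under $C_\alpha^4$. This is plausible because $C_\alpha = c_\alpha\rbra{256J}^{32J}$ is enormous: $\sup_d L^{4\alpha}d^{-3/4} \leq \rbra{16\alpha/3}^{4\alpha}$ up to an $e$-factor, which is far below $C_\alpha^2/3$. This constant bookkeeping is the only delicate check. Everything else is a direct combination of \cref{lemma:Poi-single-error}, \cref{lemma:2alpha-1-bound}, and the elementary bounds on $\mathrm{F}_\alpha\rbra{p}$.
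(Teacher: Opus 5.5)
Your proposal is correct and matches the paper's proof essentially step for step. The shared steps are the bias--variance split using \cref{lemma:Poi-single-error} summed over independent coordinates, the bound $d^2/\mathrm{F}_\alpha(p)^2 \leq d^{\max\{2,2\alpha\}}$, \cref{lemma:2alpha-1-bound} for the $\mathrm{F}_{2\alpha-1}$ term, and absorbing the $d^{5/4}L^{2\alpha}n^{-2\alpha}$ term via $1 + L^{4\alpha}/d^{3/4} \leq 1 + e^{3/4}(16\alpha/(3e))^{4\alpha} \leq C_\alpha^2$. One cosmetic slip: the product $d^{2\alpha+2\alpha-2}\cdot d^{2-2\alpha}$ equals $d^{2\alpha}$, not $d^2$, but your following bullet correctly reduces the $\alpha>1$ case to $L^{4\alpha} \lesssim d^{3/4}$.
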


\begin{proof}
    By \cref{lemma:Poi-single-error}, we have
    \begin{align*}
        \rbra*{ \E\sbra{ \widehat{F} } - \mathrm{F}_\alpha\rbra{p} }^2 & \leq C_\alpha^2 d^2 \rbra{nL}^{-2\alpha}, \\
        \Var\sbra{\widehat{F}} & \leq C_\alpha^2\rbra*{\frac{d^{5/4}L^{2\alpha}}{n^{2\alpha}} + \frac{\mathbf{1}_{\cbra{\alpha>1/2}}}{n} \sum_{i \in \sbra{d} \colon p_i > 0} p_i^{2\alpha-1}}.
    \end{align*}
    Therefore,
    \begin{align}
        \E\sbra*{ \rbra*{ \frac{ \widehat{F}  - \mathrm{F}_\alpha\rbra{p} }{ \mathrm{F}_\alpha\rbra{p} } }^2 }
        & = \E\sbra*{ \frac{ \rbra{\E\sbra{ \widehat{F} } - \mathrm{F}_\alpha\rbra{p}}^2 }{ \rbra{\mathrm{F}_\alpha\rbra{p}}^2 } + \frac{\Var\sbra{\widehat{F}}}{ \rbra{\mathrm{F}_\alpha\rbra{p}}^2 } } \\
        & \leq C_\alpha^2 \rbra*{ \frac{d^2}{\rbra{nL}^{2\alpha} \rbra{\mathrm{F}_\alpha\rbra{p}}^2} + \frac{d^{5/4} L^{2\alpha}}{n^{2\alpha} \rbra{\mathrm{F}_\alpha\rbra{p}}^2} +  \frac{\mathbf{1}_{\cbra{\alpha>1/2}}}{n \rbra{\mathrm{F}_\alpha\rbra{p}}^2} \sum_{i \in \sbra{d} \colon p_i > 0} p_i^{2\alpha-1} } \\
        & = C_\alpha^2 \rbra*{ \frac{d^2}{\rbra{nL}^{2\alpha} \rbra{\mathrm{F}_\alpha\rbra{p}}^2} \rbra*{1 + \frac{L^{4\alpha}}{d^{3/4}}} +  \frac{\mathbf{1}_{\cbra{\alpha>1/2}}}{n \rbra{\mathrm{F}_\alpha\rbra{p}}^2} \sum_{i \in \sbra{d} \colon p_i > 0} p_i^{2\alpha-1} },
    \end{align}
    which gives \cref{eq:relative-error} using $1 + L^{4\alpha}/d^{3/4} \leq 1+e^{3/4}\rbra{16\alpha/3e}^{4\alpha} \leq C_\alpha^2$, $d^2 / \rbra{\mathrm{F}_\alpha\rbra{p}}^2 \leq d^{\max\cbra{2, 2\alpha}}$ and $\mathrm{F}_{2\alpha-1}\rbra{p} / \rbra{\mathrm{F}_\alpha\rbra{p}}^2 \leq d^{\abs{1-1/\alpha}}$ for $\alpha > 1/2$ and $\alpha \neq 1$ given in \cref{lemma:2alpha-1-bound}.
\end{proof}

\begin{proof}[Proof of \cref{thm:estimator}]
    By taking 
    \[
    n = \ceil*{ C^{1/\rbra{2\alpha}} \frac{d^{\max\cbra{1, 1/\alpha}}}{\varepsilon^{1/\alpha} L} + C \mathbf{1}_{\cbra{\alpha > 1/2}} \frac{d^{\abs{1-1/\alpha}}}{\varepsilon^2} }, \qquad C = \frac{24C_\alpha^4}{\min\cbra{1, \abs{1-\alpha}^2}},
    \]
    where $C_\alpha$ is the constant defined in \cref{lemma:Poi-single-error}.
    Markov's inequality gives 
    \[
    \Pr\sbra*{\frac{\abs{\widehat{F} - \mathrm{F}_\alpha\rbra{p}}}{\mathrm{F}_\alpha\rbra{p}} \leq \frac{\min\cbra{1, \abs{1-\alpha}}}{2} \varepsilon} \geq 1 - \frac{4}{\min\cbra{1, \abs{1-\alpha}^2} \varepsilon^2} \E\sbra*{ \rbra*{ \frac{ \widehat{F} - \mathrm{F}_\alpha\rbra{p} }{ \mathrm{F}_\alpha\rbra{p} } }^2 } \geq \frac{2}{3},
    \]
    in which case, 
    \[
    \abs{\widehat{H} - \mathrm{H}_\alpha\rbra{p}} \leq \frac{2}{\abs{1-\alpha}} \frac{\abs{\operatorname{clip}_{I_\alpha}\rbra{\widehat{F}} - \mathrm{F}_\alpha\rbra{p}}}{\mathrm{F}_\alpha\rbra{p}} \leq \frac{2}{\abs{1-\alpha}} \frac{\abs{\widehat{F} - \mathrm{F}_\alpha\rbra{p}}}{\mathrm{F}_\alpha\rbra{p}} \leq \varepsilon,
    \]
    where we use $\abs{\log\rbra{1+x}}\leq2\abs{x}$ for $\abs{x}\leq1/2$.
\end{proof}

\section*{Acknowledgment}

The author used Large Language Models as AI-assisted research and writing tools throughout the
preparation of this paper. The tools were used to help brainstorm ideas and explore proof
strategies. Portions of the paper text were redrafted or modified with AI assistance across all
sections. All final mathematical claims, algorithms, proofs, citations, and wording were reviewed,
edited, and validated by the author. The author assumes responsibility for all content of the paper.

\addcontentsline{toc}{section}{References}

\bibliographystyle{alphaurl}
\bibliography{main}

\end{document}